\DeclareMathOperator*{\supp}{supp}
\DeclareMathOperator*{\conv}{ch}
\DeclareMathOperator*{\interior}{int}
\DeclareMathOperator*{\ch}{ch}
\DeclareMathOperator*{\aff}{aff}
\newcommand{\R}{\mathbb{R}}
\renewcommand{\P}{\mathcal{P}}
\newcommand{\e}{\varepsilon}
\newcommand{\mut}{\tilde{\mu}}
\newcommand{\muh}{\hat{\mu}}
\newcommand{\0}{\underline{0}}
\newcommand{\norm}[1]{\lVert #1\rVert}
\newtheorem{theorem}{Theorem}[section]
\newtheorem{proposition}{Proposition}
\newtheorem{lemma}{Lemma}
\newtheorem{corollary}{Corollary}
\theoremstyle{definition}
\newtheorem{definition}[theorem]{Definition}
\newtheorem{example}[theorem]{Example}
\newtheorem{remark}[theorem]{Remark}
\definecolor{backcolour}{rgb}{0.63, 0.79, 0.95}
\lstdefinestyle{mystyle}{
  backgroundcolor=\color{backcolour},
  basicstyle=\ttfamily\footnotesize,
  breakatwhitespace=false,         
  breaklines=true,                 
  captionpos=b,                    
  keepspaces=true,                 
  numbers=left,                    
  numbersep=5pt,                  
  showspaces=false,                
  showstringspaces=false,
  showtabs=false,                  
  tabsize=2
}
\tikzset{
  > = latex',
  axis/.style    = {very thick},
  aborder/.style = {draw},
  acomp/.style   = {fill=black, fill opacity=0.1},
  rect/.style    = {very thick},
  form/.style    = {font=\scriptsize},
  sm/.style      = {font=\small},
  vsm/.style     = {font=\scriptsize}
}
\DeclareMathOperator{\cav}{cav}
\begin{document}

\author{Andreas Kleiner \and Benny Moldovanu \and Philipp Strack \and Mark
Whitmeyer\thanks{We thank participants at ESSET 2022 (Gerzensee), the 2023 SEA conference (New Orleans), and the 2024 SAET conference (Santiago) for their comments. Kun Zhang provided excellent research assistance. Kleiner, Moldovanu, and Whitmeyer acknowledge
financial support from the DFG (German Research Foundation) via Germany's
Excellence Strategy - EXC 2047/1 - 390685813, EXC 2126/1-390838866 and the
CRC TR-224 (projects B01 and B02). Strack was supported by a Sloan Fellowship.
Kleiner: Department of Economics, University of Bonn, \href{mailto:andreas.kleiner@asu.edu}%
{andykleiner@gmail.com}; Moldovanu: Department of Economics, University of
Bonn, \href{mailto:mold@uni-bonn.de}{mold@uni-bonn.de}; Strack: Department
of Economics, Yale University, New Haven, \href{mailto:philipp.strack@yale.edu}%
{philipp.strack@yale.edu}; Whitmeyer: Department of Economics, Arizona State
University, \href{mailto:mark.whitmeyer@gmail.com}{mark.whitmeyer@gmail.com}.%
}}
\title{The Extreme Points of Fusions}
\maketitle

\begin{abstract}
Our work explores fusions, the multidimensional counterparts of mean-preserving contractions and their extreme and exposed points. We reveal an elegant geometric/combinatorial structure for these objects. Of particular note is the connection between Lipschitz-exposed points (measures that are unique optimizers of Lipschitz-continuous objectives) and power diagrams, which are divisions of a space into convex polyhedral ``cells'' according to a weighted proximity criterion. These objects are frequently seen in nature--in cell structures in biological systems, crystal and plant growth patterns, and territorial division in animal habitats--and, as we show, provide the essential structure of Lipschitz-exposed fusions. We apply our results to several questions concerning categorization.
\end{abstract}

\section{Introduction}

We study the extreme points of the set of fusions of a given measure, the
multidimensional analogs of extreme mean-preserving contractions. The set
of fusions of a given measure $\mu $---taken here to be absolutely
continuous with respect to the Lebesgue measure---is the set of probability
distributions that are dominated by $\mu $ in the convex stochastic order,
and are thus "less variable" than $\mu $.

The burgeoning literature on information design and Bayesian persuasion
highlights the importance of fusions for economic applications: by a famous
result that goes back to the work of \cite{blackwell1953equivalent} and \cite{strassen1965existence},
the set of fusions of a given measure $\mu $ is also the set of
distributions of posterior means that can be induced by some "signal" (or
"experiment") starting from a prior belief $\mu $ about the state of the
world. Thus, optimization over the set of feasible signals of an objective
function that depends linearly or convexly on the posterior mean can be
reduced to an optimization over the set of fusions of $\mu $. By Bauer's
Theorem, the optimum in such an exercise---representing an optimal way to
reveal information about the state---will be achieved at an extreme point of
the set of fusions. This feature constitutes the primary motivation
for our study.

The case where the underlying state of the world is unidimensional has been
studied along the above lines by \cite{kleiner2021extreme} and \cite{arieli2023optimal}: in that case, each extreme measure is characterized by a
collection of (convex) intervals where either the mass put by the prior on
the interval remains in place (this corresponds to full information
revelation of these states in a Bayesian-persuasion problem), or where this
mass is contracted into a measure with at most two elements in its support
(this corresponds to pooling of states where information is only partially
or not-at-all revealed). Here we extend these insights to the
multidimensional case.

Our first main result, Proposition \ref{prop:necessity_extreme} in Section \ref{section2} focuses on finitely
supported, extreme fusions of a measure $\mu $ defined on a convex, compact
subset $X$ of $\mathbb{R}^{n}$. It shows that, necessarily, each such extreme fusion $\nu \ $induces
a partition of $X$ in convex sets such that the support of $\nu $ on each
element of the partition is an affinely independent set, and such that the
restriction of $\nu $ on each element of the partition is itself a fusion of
the restriction of the prior $\mu $ on that element. The analogy to the
unidimensional result sketched above becomes clear by noting that the
maximal cardinality of an affinely independent set in $\mathbb{R}^{n}$ is $n+1$ and this equals two for states on the real line.

Our second main result, Proposition \ref{prop:characterization_strongly_exposed} in Section \ref{section3}, offers a necessary and
sufficient condition for a finitely supported measure $\nu $ to be the
unique maximizer, on the set of fusions of a given measure $\mu$, of a
linear functional induced by a Lipschitz-continuous objective function. We
call such maximizers \textit{Lipschitz-exposed} since in convex analysis 
\textit{exposed }points are those extreme points that are unique maximizers
of some linear functional.

Our result characterizes finitely supported Lipschitz-exposed fusions in
terms of very particular convex partitions that form a \textit{%
power diagram. }A power diagram in $\mathbb{R}^{n}$ is generated by a set of \textit{sites} and by a set of \textit{%
weights, }one\textit{\ }attached to each site\textit{. }It is the partition
of $\mathbb{R}^n$ into (necessarily convex) subsets such that each element in the
partition represents the set of points that are closer in Euclidean distance
(modulo the linearly additive weight) to a given site than to any other site.

In particular, the somewhat better known \textit{Voronoi diagrams} are power
diagrams where all weights attached to sites are equal. In a power diagram
the boundary between two elements of a partition is necessarily
perpendicular to the line connecting the two respective sites. Modifying the
weights while allowing them to differ in value induces parallel shifts of
this boundary, whereas the perpendicular boundary is completely fixed by the
location of sites in any Voronoi diagram\footnote{%
A somewhat related appearance of power diagrams is in the theory of 
\textit{semi-discrete optimal transport }(see for example the textbook \cite{galichon2018optimal}): when a continuous measure $\mu $ is optimally transported to
another one with a finite support $\nu $, the sets of points that are
transported to each point in the support of $\nu $ form a power diagram. }.

The above emergence of elegant combinatorial/geometrical structures in
our study is rather surprising---it cannot be gleaned from the
unidimensional environment because there each convex partition is trivially a power
diagram. The proof of Proposition \ref{prop:characterization_strongly_exposed} combines insights from duality developed
by \cite{dworczak2019simple} and \cite{dworczak2019persuasion} with a fundamental result due to \cite{aurenhammer1987power} that establishes the equivalence between power diagrams
and regular polyhedral divisions.

Recall that in the unidimensional case, an extreme fusion allows mass to stay
in place on some intervals (corresponding to full revelation of
information). For the multidimensional case, Corollary \ref{cor:suff_for_exposed} (to Proposition \ref{prop:characterization_strongly_exposed})
allows mass to stay in place within some partition elements of a\ power
diagram and offers a sufficient condition for obtaining a Lipschitz-exposed
point that may not have finite support. The proof of this Corollary requires
a non-trivial result from the theory of the Monge-Ampere partial
differential equation. It is still an open question whether \textbf{all}
Lipschitz-exposed points are indeed characterized by a power diagram where
mass in each element of the partition either stays in place or is fused to
an affinely independent set of points.

In the remainder of Section \ref{section3} we look at the gap between the necessary
conditions for finitely supported extreme points and the sufficient
conditions for finitely supported Lipschitz-exposed points. In Section \ref{section4} we study \textit{convex-partitional} extreme points
where the mass in each element of a partition is fused into a single point, 
and their application to canonical solutions to moment persuasion. These
special fusions correspond to monotone-partitional signals that were
analyzed in the unidimensional case by several authors, e.g., \cite{ivanov2021optimal}. In Section \ref{section5} we apply our results to the study of categorization. Appendix \ref{proofs} contains the proofs and Appendix \ref{sec:appendix_B} contains finer sufficient conditions for extremal fusions.

\citet{dworczak2019persuasion} offer an elegant, unified duality approach to Bayesian Persuasion. In particular, they offer a sufficient condition for the optimality of convex-partitional signals. Beyond those cases, they also prove a structural result of interest here, their Theorem 8: under the assumptions that the objective is continuously differentiable and that the prior has a density with respect to the Lebesgue measure, the moment persuasion problem has a solution defined by partition of the state space into convex regions; moreover, for each element of the partition, any induced posterior mean is an extreme point of the (possibly infinite) set of posterior means induced by the optimal signal for that region. Observe that in one dimension---where any set has at most two extreme points---Dworczak and Kolotilin’s Theorem 8 reduces to the ``bipooling'' structure of extreme points of the set of mean-preserving contractions for a given prior, obtained by \citet{kleiner2021extreme} and \citet{arieli2023optimal}. We note that our Proposition 2 uses the duality approach of Dworczak and Kolotilin and reveals a finer structure of finitely supported, Lipschitz-exposed fusions: the induced convex partition of the state space has in fact a special geometric structure, namely is a power diagram. This additional structure can be useful in applications because power diagrams can be parametrized by relatively few parameters, which simplifies the optimization problem.

\subsection{Preliminaries and Notation}

We begin by introducing the notation used below, and a few important
concepts. Let $X$ be a compact and convex subset of a finite-dimensional
Euclidean space. For $A\subseteq X$, $\interior A$ denotes the (relative)
interior of $A,$ and $\ch A$ denotes the convex hull of $A$. For a measure%
\footnote{%
Throughout, we use the term measure for what sometimes is called a signed
measure.} $\mu $ on $X$, $\supp \mu $ denotes its support, and for measurable 
$A\subseteq X$, $\mu|_A$ denotes the restriction of $\mu$ to $A$. $\delta
_{x}$ denotes a Dirac measure concentrated on $x\in X.$

The \emph{barycenter} of a measure $\mu $ on $X$ is defined by 
\begin{equation*}
r_{X}\left( \mu \right) =\frac{1}{\mu \left( X\right) }\int_{X}xd\mu \left(
x\right).
\end{equation*}%
Note that the integral in the above definition is of a vector-valued function; it means that, for any linear function $V\ $on $X$ one has 
\begin{equation*}
V(r_{X}\left( \mu \right) )=\frac{1}{\mu \left( X\right) }\int_{X}V(x)d\mu
\left( x\right) .
\end{equation*}

\begin{definition}
For measures $\mu $ and $\nu $, we say that $\mu $ \emph{dominates }$\nu $%
\emph{\ in the convex order} (or that $\nu \ $is a \emph{fusion} of $\mu ),\ 
$denoted by $\mu \succeq \nu $, if $\int \phi \,\mathrm{d}\mu \geq \int \phi
\,\mathrm{d}\nu $ for all convex functions $\phi $ such that both integrals
exist. We write $\mu \succ \nu $ if $\mu $ dominates $\nu $ in the convex
order and $\mu \neq \nu .$\ We denote by $F_{\mu }=\{\nu :$ $\nu \preceq \mu
\}$ denote the set of fusions of a given measure $\mu .$
\end{definition}

\medskip

A finite collection of vectors $V=\left\{ x_{1},\dots ,x_{k}\right\} $ in $%
\mathbb{R}^{n}$ is \emph{affinely independent} if the unique solution to $%
\sum_{i=1}^{k}\lambda _{i}x_{i}=0$ and $\sum_{i=1}^{k}\lambda _{i}=0$ is $%
\lambda _{i}=0$, $i=1,\dots ,k$. Recall also the well-known equivalence: $%
x_{1},\dots ,x_{k}$ are affinely independent if and only if $%
x_{2}-x_{1},\dots ,x_{k}-x_{1}$ are linearly independent. Linear
independence implies affine independence, but not vice-versa. In particular,
the maximal number of affinely independent vectors in $\mathbb{R}^{n}$ is $%
n+1$. A finite collection of vectors $V=\left\{ x_{1},\dots ,x_{k}\right\} $
in $\mathbb{R}^{n}$ is \emph{convexly independent} if no element $x_{i}\in V$
lies in $\ch\left( V\setminus \left\{ x_{i}\right\} \right) $.

An \emph{extreme} point of a convex set $X$ is a point $x\in X$ that cannot
be represented as a convex combination of two other points in $X$.\footnote{%
Formally $x\in A$ is an extreme point of $A$ if $x=\alpha y+(1-\alpha )z,$
for $z,y\in A$ and $\alpha \in \lbrack 0,1]$ imply together that $y=x$ or $%
z=x$.} The usefulness of extreme points for optimization stems from \emph{%
Bauer's Maximum Principle}: a convex, upper-semicontinuous functional on a
non-empty, compact, and convex set $X$ of a locally convex space attains its
maximum at an extreme point of $X.$ The \emph{Krein--Milman Theorem} states
that any convex and compact set $X$ in a locally convex space is the closed,
convex hull of its extreme points. In particular, such a set has extreme
points.

An element $x$ of a convex set $X$ is \emph{exposed\ }if there exists a
linear functional that attains its maximum on $X$ uniquely at $x$.\footnote{%
Formally, $x$ is exposed if there exists a supporting hyperplane $H$ such
that $H\cap A=\{x\}.$} Every exposed point is extreme, but the converse is
not true in general.

\section{The Finitely Supported Extreme Points of \texorpdfstring{$F_\mu$}{TEXT}}\label{section2}

Our first main result offers a necessary condition in order for a finitely
supported measure $\nu$ to be an extreme point of $F_{\mu }$. A key feature
of an extremal fusion is the partition of the domain $X$ into convex sets $%
P\ $such that all the original mass $\mu |_{P}$ remains within $P$ and is
fused into $\nu |_{P}$ whose support is an affinely independent set of
points.

\begin{proposition}
\label{prop:necessity_extreme} Let $X\subseteq \mathbb{R}^{n}$ be compact
and convex, and let $\mu $ be an absolutely continuous probability measure
on $X$. 
Suppose that $\nu $ is an extreme point of $F_{\mu }$ that is finitely
supported. Then there exists a partition $\mathcal{P}$ of $X$ into convex
sets such that, for each $P\in \mathcal{P}$, $\nu |_{P}\preceq \mu |_{P}$
and $\nu |_{P}$ has affinely-independent support.
\end{proposition}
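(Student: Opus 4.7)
The plan is to invoke Strassen's theorem first: since $\nu = \sum_{i=1}^{m} \alpha_i \delta_{y_i}$ is a fusion of $\mu$, there exists a decomposition $\mu = \sum_{i=1}^{m} \mu_i$ into non-negative measures with $\mu_i(X) = \alpha_i$ and $\int_{X} x \, d\mu_i(x) = \alpha_i y_i$. The collection $K$ of all such decompositions is convex and weak-$*$ compact, so by Krein--Milman it has an extreme point $(\mu_i^*)_{i=1}^{m}$; I fix this extreme decomposition for the rest of the argument.

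The first substantive step is to show that for $\mu$-almost every $x$, the set $V(x) := \{y_i : x \in \supp \mu_i^*\}$ is affinely independent. Suppose otherwise. Then on a set $A$ of positive $\mu$-measure, there exist indices $i_1, \ldots, i_k$ with $\{y_{i_1}, \ldots, y_{i_k}\} \subseteq V(x)$ for every $x \in A$ satisfying an affine dependence $\sum_j \lambda_j y_{i_j} = 0$, $\sum_j \lambda_j = 0$, with not all $\lambda_j = 0$. Using absolute continuity of $\mu$, one can construct a nonzero signed measure $\eta$ supported on $A$, absolutely continuous with respect to $\mu|_A$, with $\eta(X) = 0$ and $\int x \, d\eta = 0$, whose density is small enough that each replacement $\mu_{i_j}^* \mapsto \mu_{i_j}^* + \e \lambda_j \eta$ stays non-negative. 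This perturbation preserves $\sum_i \mu_i^* = \mu$ (because $\sum_j \lambda_j = 0$), the masses $\mu_i^*(X) = \alpha_i$ (because $\eta(X) = 0$), and the barycenters $\int x \, d\mu_i^* = \alpha_i y_i$ (because $\int x \, d\eta = 0$), contradicting the extremality of $(\mu_i^*)$ in $K$.

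Next I would construct the convex partition. The idea is to define an equivalence relation on $\{1, \ldots, m\}$ by declaring $i \sim j$ whenever $\mu(\supp \mu_i^* \cap \supp \mu_j^*) > 0$, and to take its transitive closure to obtain classes $I_1, \ldots, I_p$. Define the candidate cells $P_k := \ch\bigl(\bigcup_{i \in I_k} \supp \mu_i^*\bigr)$, which automatically contain the barycenters $\{y_i : i \in I_k\}$. Granted that the $P_k$ form a convex partition of $X$ (up to null sets), the restriction $(\mu_i^*)_{i \in I_k}$ certifies $\nu|_{P_k} \preceq \mu|_{P_k}$ directly, and $\{y_i : i \in I_k\}$ is forced to be affinely independent by applying the previous perturbation argument within $P_k$, using signed measures supported inside $P_k$ that exploit the connectedness (via overlapping supports) of the indices in $I_k$ to assemble a valid perturbation.

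The main obstacle, in my view, is verifying that these $P_k$ really do form a partition---namely, that they have pairwise disjoint interiors modulo null sets. Extremality of $(\mu_i^*)$ inside $K$ is too weak for this; one must use the stronger hypothesis that $\nu$ itself is extreme in $F_\mu$. The idea is that if two cells $P_k, P_{k'}$ overlapped on a set of positive $\mu$-measure, one could exchange mass between indices in different equivalence classes to produce two distinct fusions $\nu_1, \nu_2 \in F_\mu$ with $\nu = (\nu_1 + \nu_2)/2$, contradicting extremality of $\nu$. Setting up this exchange so that the transported measures still satisfy the convex-order condition against $\mu$---and handling boundary ambiguities when some $y_i$ lies on the shared boundary of candidate cells---is the central technical challenge.
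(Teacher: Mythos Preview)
Your outline has the right skeleton (Cartier/Strassen decomposition, grouping indices, taking convex hulls), but there are two genuine gaps.

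\textbf{Step 1 proves the wrong thing.} Look again at your perturbation $\mu_{i_j}^* \mapsto \mu_{i_j}^* + \varepsilon\lambda_j\eta$ with $\eta(X)=0$ and $\int x\,d\eta=0$. Checking the three constraints (total $\mu$, masses $\alpha_{i_j}$, barycenters $y_{i_j}$) uses only $\sum_j \lambda_j=0$ and the two moment conditions on $\eta$; the affine relation $\sum_j\lambda_j y_{i_j}=0$ is never invoked. So the same argument applies with $k=2$ and $\lambda_1=1,\lambda_2=-1$, and what you have actually shown is that an extreme decomposition satisfies $|V(x)|\le 1$ almost everywhere: the $\mu_i^*$ are restrictions $\mu|_{B_i}$ to a (generally non-convex) measurable partition $\{B_i\}$. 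That is a true fact, but it collapses your equivalence relation (the $B_i$ are disjoint, so every class is a singleton) and leaves you exactly where the real difficulty begins.

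\textbf{The ``main obstacle'' is the whole proof.} Once you have the $B_i$, the sets $P_i=\ch(B_i)$ need not be disjoint, and your suggested equivalence (via overlap of \emph{supports}) does not detect this. The correct relation is via overlap of \emph{interiors of convex hulls}, and---crucially---it must be iterated: two groups whose convex hulls are disjoint may, after merging with their neighbours, acquire overlapping hulls. The paper builds this recursive graph $G_1,G_2,\ldots$ and argues that if the procedure terminates only at the trivial one-vertex graph, the original convex partition can be refined. When it stops earlier at a connected component $C$ with $\{y_i:i\in C\}$ affinely dependent, one must \emph{move mass between $\mu_i$'s whose supports are disjoint but whose convex hulls overlap}, keeping each barycenter fixed. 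That is the content of the paper's ``shifting'' Lemma~\ref{lemma:shifting} and the tree-decomposition Lemma~\ref{lemma:decompo}: for any direction $d$ and small $\varepsilon$, one constructs $\pi$ with $\mu_i+\pi\ge 0$, $\mu_j-\pi\ge 0$, $\pi(X)=\varepsilon$, $\int x\,d\pi=\varepsilon d$, and then propagates such transfers along a spanning tree so that every new $\tilde\mu_i$ has barycenter in the right affine span. This machinery is exactly what your phrase ``exploit the connectedness \ldots\ to assemble a valid perturbation'' is standing in for, and it is not a routine step; without it you cannot produce the two fusions $\tilde\nu,\hat\nu$ with $\nu=\tfrac12(\tilde\nu+\hat\nu)$.

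In short: the extreme-decomposition idea is a nice warm-up and gives you the measurable partition $\{B_i\}$ for free, but it does not shortcut the convexification. The heart of the argument is a shifting lemma for measures with disjoint supports but overlapping convex hulls, applied recursively along a tree; that is what your sketch is missing.
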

In the finite-state Bayesian persuasion problem introduced by \cite{kamenicagentzkow2011}, there is an unknown state of the world \(\theta \in \Theta\) (with $|\Theta| = n \in \mathbb{N}$) distributed according to some full-support prior \(\mu_0 \in \Delta\left(\Theta\right)\), about which a principal can commit to an information structure. The principal's preferences over the agent's action induce a reduced-form value function \(w \colon \Delta\left(\Theta\right) \to \mathbb{R}\), and the principal's problem reduces formally to \(\max_{F} \int_{\Delta\left(\Theta\right)} w(\mu)dF(\mu)\), subject to the Bayes-plausibility (martingality) constraint \( \int_{\Delta\left(\Theta\right)} \mu dF(\mu) = \mu_0\). 

\cite{kamenicagentzkow2011} use the Fenchel-Bunt extension of Carath\'{e}odory's Theorem (\cite{fenchel1929}, \cite{bunt1934}) to show that (under a regularity specification on the objective) there exists an optimal solution to the principal's problem in which the optimal distribution has support on at most \(n\) points. An alternative way of obtaining this finding is by observing that \cite{winkler1988}'s main result can be used to argue that the extreme points of distributions supported on the \(\left(n-1\right)\)-simplex with a specified barycenter have affinely independent support. This observation provides intuition for the structure of the necessary condition given by this Proposition. If a measure is an extreme point of the set of fusions there is always a decomposition that makes it akin to a collection of probability measures with specified barycenters for which locally the support is affinely independent---this suggests a local-global decomposition of extremal fusions that we analyze in Section \ref{sec:moment_persuasion}.

\section{Exposed Fusions}\label{section3}

In this Section we prove our main result, a characterization of those
finitely supported fusions of a given distribution that are unique
maximizers of linear functionals. 

\begin{definition}
A measure $\nu \in F_{\mu }$ is a\emph{\ Lipschitz}-\emph{exposed }point%
\emph{\ }of $F_{\mu }$ if there exists a Lipschitz-continuous function $%
u\colon X\rightarrow \mathbb{R}$ such that $\nu $ is the unique solution to
the problem 
\begin{equation*}
\max_{\lambda \in F_{\mu }}\int u(x) \,\mathrm{d}\lambda (x)
\end{equation*}
\end{definition}

In order to characterize the Lipschitz-exposed points of the set of fusions
we need the following concept:

\subsection{Power Diagrams}

\begin{figure}
    \centering
    \includegraphics[width=1\linewidth]{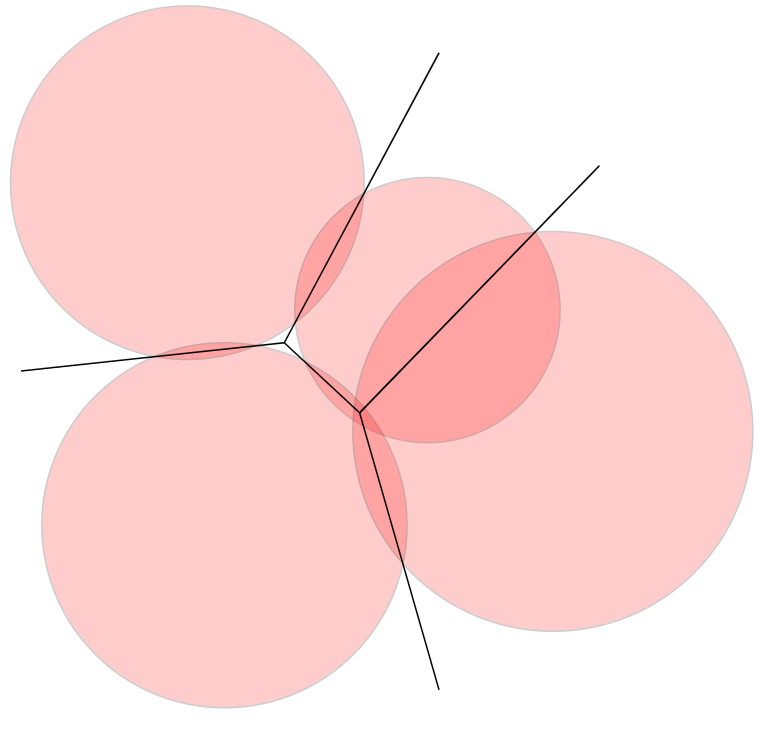}
    \caption{A power diagram (\cite{eppstein2014})}
    \label{fig:power}
\end{figure}

\begin{definition}
Consider a finite collection of \textit{sites} $S=\left\{ s_{1},\dots
,s_{k}\right\} $ in $\mathbb{R}^{n}$ and a finite collection of \textit{%
weights} $W=\left\{ w_{1},\dots ,w_{k}\right\} $ in $\mathbb{R}_{+}.$ For
each site $s_{i}$ and weight $w_{i}$ define the shifted distance function 
\begin{equation*}
g_{i}(x)=\parallel x-s_{i}\parallel^2 -\ w_{i}
\end{equation*}%
where $\parallel \cdot \parallel $ denotes the Euclidean norm. For each site 
$s_{i}$ define the \textit{cells} $P_{i}^{S,W}(x)=P_{i}(x)\subseteq $ $%
\mathbb{R}^{n}$ as following: 
\begin{equation*}
P_{i}^{S,W}(x)=\{x:\ \forall j,\ g_{i}(x)\leq g_{j}(x)\}
\end{equation*}%
The collection of cells $\mathcal{P}=\{P_{1},P_{2},..,P_{k}\}$ is the \emph{%
power diagram} generated by $(S,W)\footnote{%
Voronoi diagrams are those power diagrams where all weights are equal, e.g. $%
w_{i}=0$ for all $i.$ In contrast to Voronoi diagrams, in a power diagram a
cell may be empty, and a site may not belong to its cell.}.$
\end{definition}

For each $i,j,i\neq j,$ the set $P_{i}\cap P_{j}$ in a power diagram
is contained in a hyperplane that is perpendicular to the line connecting the sites $s_{i}$ and $s_{j}$. It is clear then that each cell of a power diagram is a polyhedron. Moreover, the intersection of any two cells in a power diagram is a common face. Every power diagram therefore corresponds to a polyhedral subdivision:
\begin{definition}
A \emph{polyhedral subdivision }$T$ of a compact and convex set $X\subseteq \R^n$ is a finite collection of polyhedra $%
K_{i} $ such that:

\begin{itemize}
\item $\cup K_{i}=X$.

\item For any $K_{i}$ in $T$, all faces of $K_{i}$ are in $T.$

\item The intersection of any two polyhedra $K_{i}$ and $K_{j}$ in $T$ is a
face of both.

\end{itemize}
\end{definition}

A polyhedral subdivision $T$ of $X$ in $\mathbb{R}^{n}\ $is \emph{%
regular} if, for each point $x\in X$ there is a \emph{height} $\alpha
_{x}\geq 0$ such that $T$ is isomorphic to the set of lower faces\footnote{%
Lower faces are in directions with a negative $(d+1)-$th coordinate.} of the
polyhedron $ch\{(x,\alpha _{x})\in \mathbb{R}^{n+1},x\in X\}.$ Equivalently, for any regular polyhedral subdivision $T$ there is a convex function that is affine on each $K_i\in T$ and, if the function is affine on a set $B$, then $B\subseteq K_i$ for some $K_i\in T$. See Figure \ref{fig:mother} for an example of a polyhedral subdivision that is not regular.

\begin{figure}
    \centering
    \includegraphics[width=0.5\linewidth]{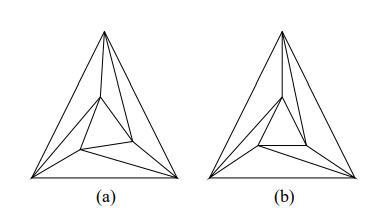}
    \caption{A regular (a) and irregular (b) subdivision (\cite{lee_santos_subdivisions_2017}).}
    \label{fig:mother}
\end{figure}

A fundamental result due to \cite{aurenhammer1987power} establishes the
equivalence between power diagrams and regular polyhedral subdivisions. To see that each power diagram $\mathcal{P}$ corresponds to a regular polyhedral subdivision, consider the function 
\[p(x):= - \min_i \{g_i(x)-\norm{x}^2\} = - \min_i \{\norm{s_i}^2 -2 x\cdot s_i - w_i\}.\] This function is convex and its restriction to any cell of the power diagram is affine. Moreover, if $p$ is affine on $B\subseteq X$ then there is $P\in \mathcal{P}$ with $B\subseteq P$. Therefore, each power diagram corresponds to a regular polyhedral subdivision and we will use this observation below.

\begin{figure}
    \centering
    \includegraphics[width=1\linewidth]{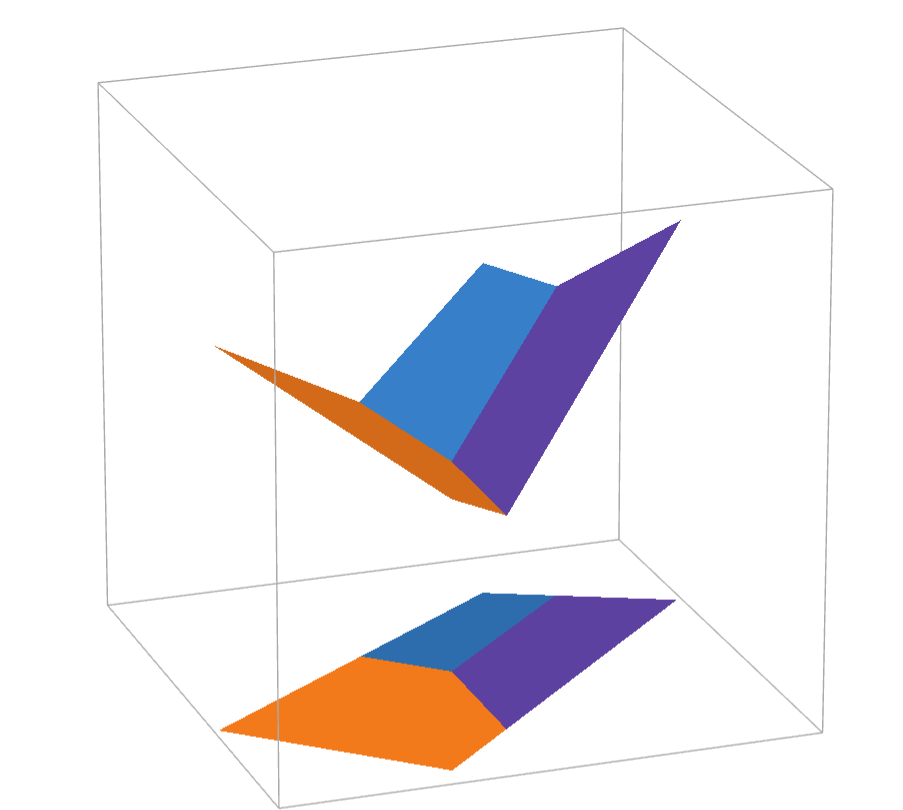}
    \caption{The lower faces of a polyhedron in \(\mathbb{R}^3\) and the corresponding polyhedral subdivision.}
    \label{fig:subdivision}
\end{figure}

\subsection{The Main Result}

\begin{proposition}
\label{prop:characterization_strongly_exposed} Let $X\subseteq \mathbb{R}%
^{n} $ be compact and convex, and let $\mu $ be a probability measure with
full support on $X$ that is absolutely continuous with respect to the
Lebesgue measure.

Let $\nu \in F_{\mu }$ have finite support. Then $\nu $ is a
Lipschitz-exposed point of $F_{\mu }$ if and only if there exists a power
diagram $\mathcal{P}$ of $X$ such that $\lambda |_{P}\preceq \mu |_{P}$ for
all $P\in \mathcal{P}$ and $\supp(\lambda )\subseteq \supp(\nu )$ jointly
imply $\lambda =\nu $.
\end{proposition}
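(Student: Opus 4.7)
My plan is to combine Dworczak-Kolotilin (DK) strong duality---$\max_{\lambda \in F_{\mu }}\int u\, \mathrm{d}\lambda = \inf\{\int \phi \, \mathrm{d}\mu : \phi \text{ convex},\ \phi \geq u\}$---with \citet{aurenhammer1987power}'s identification of regular polyhedral subdivisions with power diagrams. The workhorse auxiliary fact is a \emph{decomposition lemma}: if $p$ is convex and piecewise affine with affine cells $\{P\}$ and $\lambda \preceq \mu$ satisfies $\int p\, \mathrm{d}\lambda = \int p\, \mathrm{d}\mu$, then $\lambda|_P \preceq \mu|_P$ for every $P$. This would follow from Strassen's theorem: writing $\lambda = \int k(x,\cdot)\, \mathrm{d}\mu(x)$ with each $k(x,\cdot)$ having barycenter $x$, the equality forces $k(x,\cdot)$ to sit in the affine region of $p$ at $\mu$-a.e.\ $x$, hence in the cell containing $x$ (using absolute continuity of $\mu$ to discard the $\mu$-null cell boundaries).

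For sufficiency, let $p = \max_i \ell_i$ be Aurenhammer's convex piecewise-affine function associated to the given power diagram $\mathcal{P}$, and set
\[ u(x) := p(x) - \mathrm{dist}(x,\supp(\nu)), \]
which is Lipschitz with $u = p$ on $\supp(\nu)$ and $u < p$ elsewhere. Taking $\lambda = \nu$ in the hypothesis gives $\nu|_P \preceq \mu|_P$ on each cell, hence $\int u\, \mathrm{d}\nu = \int p\, \mathrm{d}\nu = \int p\, \mathrm{d}\mu$ by affinity on cells (equality of mass and barycenter on each $P$ suffices). For any $\lambda \in F_{\mu}$,
\[ \int u\, \mathrm{d}\lambda \leq \int p\, \mathrm{d}\lambda \leq \int p\, \mathrm{d}\mu = \int u\, \mathrm{d}\nu, \]
so $\nu$ is a maximizer. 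If $\lambda$ is another maximizer, both inequalities are equalities, forcing $\supp(\lambda) \subseteq \supp(\nu)$ (from $u = p$ holding $\lambda$-a.e.)\ and $\lambda|_P \preceq \mu|_P$ (by the decomposition lemma). The hypothesized implication then gives $\lambda = \nu$.

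For necessity, suppose Lipschitz $u$ exposes $\nu$. DK duality yields a convex $\phi \geq u$ with $\phi = u$ on $\supp(\nu)$ and $\int \phi\, \mathrm{d}\mu = \int u\, \mathrm{d}\nu$. For each $x_i \in \supp(\nu)$, pick an affine support $\ell_i$ of $\phi$ at $x_i$ and set $\tilde\phi := \max_i \ell_i$, so $\tilde\phi \leq \phi$ and $\tilde\phi(x_i) = \ell_i(x_i) = u(x_i)$. The chain
\[ \int u\, \mathrm{d}\nu = \int \tilde\phi\, \mathrm{d}\nu \leq \int \tilde\phi\, \mathrm{d}\mu \leq \int \phi\, \mathrm{d}\mu = \int u\, \mathrm{d}\nu \]
then collapses to equalities. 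The affine-maximal cells of $\tilde\phi$ form a regular polyhedral subdivision, hence a power diagram $\mathcal{P}$ by Aurenhammer; the decomposition lemma applied to the second equality gives $\nu|_P \preceq \mu|_P$ on every $P$. Finally, any $\lambda$ with $\supp(\lambda) \subseteq \supp(\nu)$ and $\lambda|_P \preceq \mu|_P$ for all $P$ yields $\int u\, \mathrm{d}\lambda = \int \tilde\phi\, \mathrm{d}\lambda = \int \tilde\phi\, \mathrm{d}\mu = \int u\, \mathrm{d}\nu$, so $\lambda = \nu$ by the uniqueness of the exposure.

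The main technical obstacle is the decomposition lemma: the Strassen-kernel argument is natural but requires careful handling at cell boundaries (where multiple affine pieces of $p$ coincide), and absolute continuity of $\mu$ is what lets one absorb those boundaries into a null set. A secondary subtlety arises in the necessity direction, where producing a \emph{finite}-cell piecewise-affine dual $\tilde\phi$ from an abstract convex $\phi$ relies on the finiteness of $\supp(\nu)$: one only takes affine supports at the finitely many $x_i$, which is precisely what makes the resulting subdivision a genuine (finite-site) power diagram in Aurenhammer's sense.
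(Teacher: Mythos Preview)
Your overall strategy is sound and, interestingly, takes a different route from the paper in two places. First, for the ``decomposition lemma'' you use the Strassen/dilation representation of the convex order directly; the paper instead invokes Bauer's theorem to pass to an extreme maximizer and then appeals to Proposition~\ref{prop:necessity_extreme} to obtain a convex partition on which $p$ is affine, refining it into the cells of $\mathcal{P}$. Second, in the necessity direction you build the piecewise-affine dual $\tilde\phi=\max_i\ell_i$ by simply taking one affine support of the DK dual $\phi$ at each $x_i\in\supp(\nu)$; the paper instead shows that the DK dual $p$ is itself piecewise affine by again passing through Proposition~\ref{prop:necessity_extreme} to produce a finest partition and arguing via Jensen that $p$ is affine on each piece. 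Your route is more self-contained (it never uses Proposition~\ref{prop:necessity_extreme}), while the paper's route ties the two propositions together and makes the affinely-independent-support structure of $\nu$ visible along the way.

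Two technical points to tighten. (i) Your Strassen kernel is written in the wrong direction: $\lambda\preceq\mu$ means there is a dilation $D(y,\cdot)$ with barycenter $y$ such that $\mu=\int D(y,\cdot)\,\mathrm d\lambda(y)$, not $\lambda=\int k(x,\cdot)\,\mathrm d\mu(x)$ with barycenter $x$. With the correct orientation, Jensen equality forces $D(y,\cdot)$ to be supported on the face of $\mathcal{P}$ through $y$ for $\lambda$-a.e.\ $y$; absolute continuity of $\mu$ then kills all faces of positive codimension, so $\lambda$-a.e.\ $y$ lies in the interior of a unique cell and $D(y,\cdot)$ is supported there, which yields $\lambda|_P\preceq\mu|_P$. (ii) The sentence ``taking $\lambda=\nu$ in the hypothesis gives $\nu|_P\preceq\mu|_P$'' is not a valid deduction from the bare implication; you need that some $\lambda$ satisfies the antecedent (then uniqueness forces it to be $\nu$). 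The paper tacitly reads the condition this way as well, so this is a presentational rather than substantive gap, but it is worth stating explicitly.
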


In particular, $\nu\in F_{\mu}$ is a Lipschitz-exposed point if there is a
power diagram such that, for each cell $P$, $\nu|_P\preceq \mu|_P$ and the
support of $\nu|_P$ is affinely independent: in that case, since $%
\lambda_P\preceq \mu|_P$ implies that $\nu|_P$ and $\lambda|_P$ have the
same barycenter, affine independence of the support implies that $\nu|_P$
and $\lambda|_P$ must be equal.

\begin{proof}[Proof of Proposition \ref{prop:characterization_strongly_exposed}]
$\left( \Leftarrow \right) $ Since $\mathcal{P}$ is a power diagram, there exists a continuous convex function $%
p\colon X\rightarrow \mathbb{R}$ such that for each $P\in $\thinspace\ $%
\mathcal{P\ }$the restriction of $p$ to $P$ is affine, and if $p$ is affine
on $B\subseteq X$ then there is $P\in \mathcal{P}$ with $B\subseteq P$.

Define the Lipschitz-continuous function $u\colon X\rightarrow \mathbb{%
R}$ by 
\begin{equation*}
u\left( x\right) =p\left( x\right) -\inf_{y\in \supp(\nu )}\lVert x-y\rVert 
\text{.}
\end{equation*}%
Note that, by definition, $u\left( x\right) =p\left( x\right) $ if $x\in %
\supp(\nu )$ and $u\left( x\right) <p\left( x\right) $ for $x\in X\setminus %
\supp(\nu )$. We now claim that $\nu $ is the unique solution to 
\begin{equation*}
\max_{\lambda \in F_{\mu }}\int u(x)d\lambda (x)  \tag{L}
\label{eq:util_maxapp}
\end{equation*}

\noindent \textbf{Step 1:} $\nu $\textit{\ is a solution to problem %
\eqref{eq:util_maxapp}.}

By construction, $\int u(x)d\nu (x)=\int p\left( x\right) d\nu (x)$.
Moreover, for each $P\in \mathcal{P}$, $\int_{P}p\left( x\right) d\nu
=\int_{P}p\left( x\right) d\mu $ since $\nu |_{P}\in F_{\mu |_{P}}$ and $p$
is affine on $P$. Therefore, 
\begin{equation*}
\int u(x)d\nu (x)=\int p\left( x\right) d\mu (x).  \tag{\(1\)}
\label{eq:unu_pmu}
\end{equation*}

For any $\lambda \in F_{\mu }$ 
we obtain 
\begin{equation*}
\int u(x)d\lambda (x)\leq \int p\left( x\right) d\lambda (x)\leq \int
p\left( x\right) d\mu (x)\text{,}
\end{equation*}%
where the first inequality follows from $u\le p$ and the second follows by the definition of the convex order.
We conclude that $\nu $ is a solution to \eqref{eq:util_maxapp}.

\medskip

\noindent \textbf{Step 2:} \textit{There is no other solution to problem %
\eqref{eq:util_maxapp}.}\footnote{%
For a similar argument, see the proof of Theorem 2 in \cite%
{arieli2023optimal}.}

Let $\lambda \in F_{\mu }$ solve problem \eqref{eq:util_maxapp}. First, we
claim that $\supp\lambda \subseteq \supp\nu $. If not, $\int u(x)d\lambda
(x)<\int p\left( x\right) d\lambda (x)$ since $u(x)<p\left( x\right) $ for
all $x\not\in \supp\nu $. This implies 
\begin{equation*}
\int u(x)d\lambda (x)<\int p\left( x\right) d\lambda (x)\leq \int p\left(
x\right) d\mu (x)=\int u(x)d\nu (x)\text{,}
\end{equation*}%
where the second inequality follows because $p$ is convex and because $%
\lambda \in F_{\mu }$, and where the equality follows from \eqref{eq:unu_pmu}%
. We conclude that $\lambda $ does not solve problem \eqref{eq:util_maxapp},
a contradiction.

We now establish that $\lambda |_{P}\preceq \mu |_{P}$ for all $P\in 
\mathcal{P}$: By Bauer's Theorem, we can assume that $\lambda $ is an
extreme point of $F_{\mu }$. By Proposition \ref{prop:necessity_extreme},
there exists a finite collection $\mathcal{Q}=\{Q_{j}|j\in J\}$ of convex
sets such that, for all $Q\in \mathcal{Q}$, $\mu (Q)>0$, $\lambda |_{Q}\prec
\mu |_{Q}$ and the support of $\lambda $ on $Q$ is affinely independent.
Moreover, we can assume that the partition $\mathcal{Q}$ cannot be further
refined. Note then that 
\begin{equation*}
\int p\left( x\right) d\lambda \leq \int p\left( x\right) d\mu =\int
u(x)d\nu =\int u(x)d\lambda \leq \int p\left( x\right) d\nu ,
\end{equation*}%
where:

\begin{enumerate}
\item the first inequality follows since $p$ is convex and since $\lambda
\in F_{\mu }$,

\item the first equality follows from \eqref{eq:unu_pmu},

\item the second equality follows since $\lambda $ achieves, by assumption,
the same objective value as $\nu $

\item the final inequality follows since $u\leq p$.
\end{enumerate}

Therefore, both inequalities are, in fact, equalities. In particular, for
each $Q\in \mathcal{Q}$, $\int_{Q}p\left( x\right) d\lambda
(x)=\int_{Q}p\left( x\right) d\mu (x)$. Since $Q$ has positive $\mu $%
-measure and since the partition $\mathcal{Q}$ cannot be refined, the
function $p$ must be affine on $\ch(\supp(\mu |_{Q}))$.

It follows that for each $Q\in \mathcal{Q}$ there is $P\in \mathcal{P}$ such
that $\ch(\supp(\lambda |_{Q}))\subseteq \ch(\supp(\mu |_{Q}))\subseteq P$.
Therefore, $\lambda |_{P}\prec \mu |_{P}$ for each $P\in P,$ and our
hypothesis implies that $\lambda =\nu $. \bigskip

\noindent $\left( \Rightarrow \right) $ Let $\nu $ be the unique solution to %
\eqref{eq:util_maxapp}, where $u$ is Lipschitz-continuous. Proposition 4 in 
\cite{dworczak2019persuasion} implies that the associated moment persuasion
problem with prior $\mu \ $is regular. It follows from Theorem 3 in \cite%
{dworczak2019persuasion} that there is a convex function $p:X\rightarrow \mathbb{R%
}$ with $p\geq u$ that satisfies 
\begin{equation*}  \label{eq:udnu=pdmu}
\int u(x)d\nu (x)=\int p\left( x\right) d\mu (x)\text{.}  \tag{\(2\)}
\end{equation*}

This implies, in particular, that 
\begin{equation*}  \label{eq:u_=_p}
u(x)=p\left(x\right) \text{ for all } x\in \supp(\nu)\text{.}  \tag{\(3\)}
\end{equation*}

Since $\nu $ is the unique solution to problem \eqref{eq:util_maxapp},
Bauer's Theorem implies that $\nu $ is an extreme point of $F_{\mu }$.
Proposition \ref{prop:necessity_extreme} implies that there exists a
partition $\mathcal{P}$ of $X$ such that, for each $P\in \mathcal{P}$, $P$
is convex, $P$ has positive $\mu $-measure, $\nu |_{P}\prec \mu |_{P}$, and $%
\supp(\nu |_{P})$ is affinely independent. Assume that $\mathcal{P}$ is a
finest partition with these properties.

We claim that, for each $P\in \mathcal{P}$, $p$ is affine on $P$. Let $%
\{x_{1},...,x_{m}\}$ be the support of $\nu |_{P}$ and let $\mu
|_{P}=\sum_{i}\mu _{i}$ be such that $\nu (\{x_{i}\})=\mu _{i}(X)$ and $%
r(\mu _{i})=x_{i}$ (see Lemma \ref{lemma:cartier}). Note that since $p$ is
convex and since $\int p\left( x\right) d\nu |_{P}=\int p\left( x\right)
d\mu |_{P}$, we obtain that $\int p\left( x\right) d\mu _{i}(x)=p(x_{i})$.
Jensen's inequality implies that $\int p\left( x\right) d\mu
_{i}(x)=p(x_{i}) $ if and only if $p$ is affine on $\ch(\supp(\mu _{i}))$.
We conclude that $p $ is affine on $P$ since otherwise we could refine the
partition $\mathcal{P}$, contradicting our assumption.

The above argument implies that $p$ is a piecewise affine, convex function.
Let $\mathcal{Q}$ denote the power diagram corresponding to the lower
envelope of this function (recall Aurenhammer's fundamental equivalence
Theorem). Suppose that there is $\lambda \neq \nu $ satisfying $\lambda
|_{Q}\prec \mu |_{Q}$ for all $Q\in \mathcal{Q}$ and $\supp(\lambda
)\subseteq \supp(\nu )$. Using \eqref{eq:u_=_p}, this yields $\int
u(x)d\lambda (x)=\int p\left( x\right) d\lambda (x)$. Since $p$ is affine on 
$Q$ and since $\lambda |_{Q}\prec \mu |_{Q}$, we obtain for any $Q\in 
\mathcal{Q}$ that $\int p\left( x\right) d\lambda |_{Q}(x)=\int p\left(
x\right) d\mu |_{Q}(x)$. Putting these observations together and using %
\eqref{eq:udnu=pdmu}, we finally obtain 
\begin{equation*}
\int u(x)d\lambda (x)=\int p\left( x\right) d\lambda (x)=\int p\left(
x\right) d\mu (x)=\int u(x)d\nu (x)
\end{equation*}%
This contradicts the hypothesis that $\nu $ is the unique solution to %
\eqref{eq:util_maxapp}, and therefore we conclude that $\lambda |_{Q}\prec
\mu |_{Q}$ for all $Q\in \mathcal{Q}$ and $\supp(\lambda )\subseteq \supp%
(\nu )$ imply $\lambda =\nu $.
\end{proof}

The result above treats extreme fusions with finite support. It is clear
that the original measure $\mu $ is itself an extreme point of $F_{\mu }$. In a persuasion problem, this corresponds to the signal that perfectly
reveals to the receiver each realized state. Perfect revelation can be
optimal in some cases, i.e., if the sender's utility function is convex
on a certain domain. The next Corollary shows that a combination of perfect
revelation and a finitely supported fusion can also constitute a
Lipschitz-exposed point. We first need a Lemma whose proof
uses the theory of the Monge-Ampere equation: 

\begin{lemma}
\label{l:existence_strictly_convex} For any compact and convex set $X\subset 
\mathbb{R}^{n}$ there is a Lipschitz-continuous and convex function $%
f:X\rightarrow \mathbb{R}$ that is strictly convex on the interior of $X$
and that satisfies $f(x)=0$ for all $x$ in the boundary of $X$.
\end{lemma}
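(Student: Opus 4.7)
The plan is to invoke the classical existence theory for the Dirichlet problem for the Monge-Ampere equation. The case where $X$ has empty interior is trivial (take $f\equiv 0$); henceforth I assume $\interior(X)\neq \emptyset$.

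First, I would apply Alexandrov's generalized existence theorem: for any finite non-negative Borel measure $\mu$ on $X$, there exists a unique convex function $f\colon X\to\R$, continuous on $X$, with $f=0$ on $\partial X$ and Monge-Ampere measure equal to $\mu$. I would choose $\mu$ to have a density that is strictly positive on $\interior(X)$ but vanishes at $\partial X$ at a controlled rate---concretely, a density proportional to the distance to $\partial X$.

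Second, I would establish strict convexity of $f$ on $\interior(X)$ by arguing that if $f$ were affine on some segment whose endpoints lie in $\interior(X)$, then the subdifferential of $f$ over an open neighborhood of that segment would collapse to a lower-dimensional set, forcing the Monge-Ampere measure to vanish on a set of positive Lebesgue measure---contradicting the strict positivity of the density of $\mu$ in the interior.

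The hard part will be Lipschitz continuity of $f$ on all of $X$, including up to $\partial X$. For general convex $X$ with flat facets or corners, Monge-Ampere solutions with a bounded non-vanishing right-hand side need not be Lipschitz up to the boundary, because their gradients may blow up along flat portions of $\partial X$. The degenerate choice of $\mu$ above is the key device: I would compare $f$ with Lipschitz, convex barrier functions of the form $h(x)=-C\cdot d(x,\partial X)$---which are convex on $X$, Lipschitz, and vanish on $\partial X$---and invoke the Monge-Ampere comparison principle to obtain a pointwise linear boundary estimate $|f(x)|\le C\cdot d(x,\partial X)$ for an appropriate constant $C$. Together with the convexity of $f$, this linear decay at the boundary yields a uniform bound on the (sub)gradient of $f$ via a standard chord-slope argument along maximal chords of $X$, establishing the desired Lipschitz continuity.
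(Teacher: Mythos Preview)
Your overall strategy---solve a Monge--Ampere Dirichlet problem with zero boundary data and then read off convexity, strict convexity, and Lipschitz regularity---is exactly the paper's. The paper simply invokes three references: Hartenstine for existence of a continuous convex Alexandrov solution, Gutierrez's Corollary~5.2.2 for strict convexity on the interior when the right-hand side is strictly positive, and Gutierrez's Theorem~5.4.8 for Lipschitz continuity. Your existence and strict-convexity sketches are in the same spirit, though the strict-convexity claim that ``the subdifferential over an open neighbourhood of the segment collapses to a lower-dimensional set'' is not literally correct: affinity on a segment pins the subdifferential \emph{on that segment} to a hyperplane, not on a full neighbourhood. The correct mechanism (behind the Gutierrez reference) is Caffarelli's localization theorem, which says that when the Monge--Ampere measure is locally bounded below, the contact set of any supporting affine function has no extreme points in the interior.

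The substantive gap is the Lipschitz step. Your barrier $h(x)=-C\,d(x,\partial X)$ is convex and vanishes on $\partial X$, but its Monge--Ampere measure is \emph{singular} with respect to Lebesgue: it is supported on the medial axis of $X$, and for instance when $X=[0,1]^2$ it is a single Dirac mass of size $2C^2$ at the centre (on each of the four triangles cut out by the diagonals $h$ is affine, and along the open diagonal segments the subdifferential image is one-dimensional). The comparison principle you need in order to conclude $f\ge h$, i.e.\ $|f(x)|\le C\,d(x,\partial X)$, requires the Monge--Ampere measure of $h$ to dominate that of $f$ as Borel measures on all of $X$. Since the latter equals $\mu$ and has strictly positive density on $\interior X$, while the former vanishes on every Borel set disjoint from the medial axis, the required domination fails on any such set of positive Lebesgue measure---regardless of how fast you let the density of $\mu$ decay near $\partial X$. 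So the boundary degeneracy of $\mu$ does not rescue the comparison, and the barrier argument as written does not deliver the linear boundary estimate. You would need either a genuinely different barrier whose Monge--Ampere measure dominates $\mu$ throughout the interior, or a direct appeal to a Lipschitz-regularity theorem for Alexandrov solutions on convex domains, as the paper does.
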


\begin{proof}
    It follows from Theorem 1.1 in \citet{hartenstine} that the Dirichlet problem of the Monge-Ampere equation $det D^2f=\mu$ subject to the constraint $f=0$ on the boundary of $X$ has a continuous convex solution $f$ for any finite Borel measure $\mu$. By Corollary 5.2.2 in \citet{gutierrez}, this solution is strictly convex on the interior of $X$ whenever $\mu$ is strictly positive. Moreover, $f$ is Lipschitz-continuous by Theorem 5.4.8 in \citet{gutierrez}.
\end{proof}

\begin{corollary}
\label{cor:suff_for_exposed} Let $X\subseteq \mathbb{R}^{n}$ be compact and
convex, and let $\mu $ be a probability measure with full support on $X$
that is absolutely continuous with respect to the Lebesgue measure. Let $%
\upsilon $ be a fusion of $\mu $ and suppose that there exists a power
diagram $\mathcal{P}$ of $X$ such that, for each $P\in \mathcal{P}$, either $%
\nu |_{P}\preceq \mu |_{P}$ and $\nu |_{P}$ has affinely independent
support, or $\nu |_{P}=\mu |_{P}$. Then $\nu $ is a Lipschitz-exposed point
of $F_{\mu }$.
\end{corollary}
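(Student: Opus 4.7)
My plan is to construct a Lipschitz-continuous function $u\colon X\to\mathbb{R}$ that exposes $\nu$, adapting the sufficiency argument in the proof of Proposition \ref{prop:characterization_strongly_exposed}. Split the cells as $\mathcal{P}=\mathcal{P}_f\sqcup\mathcal{P}_r$, where $\mathcal{P}_f$ collects the cells on which $\nu|_P$ has affinely independent finite support and $\mathcal{P}_r$ collects the cells on which $\nu|_P=\mu|_P$. If $\mathcal{P}_r=\emptyset$, Proposition \ref{prop:characterization_strongly_exposed} already applies, so the new ingredient must distinguish $\mu|_P$ from any other $\lambda|_P\preceq\mu|_P$ on each full-revelation cell $P\in\mathcal{P}_r$.

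The construction modifies the piecewise affine convex function $p$ associated with $\mathcal{P}$ (with $p|_P=a_P$ for each cell's affine piece) by adding a strictly convex ``bump'' on each full-revelation cell. For each $P\in\mathcal{P}_r$, invoke Lemma \ref{l:existence_strictly_convex} to obtain a Lipschitz convex function $f_P\colon P\to\mathbb{R}$ that is strictly convex on $\interior P$ and vanishes on $\partial P$; extend $f_P$ by zero to a Lipschitz function on $X$. Set
\[
\tilde{p}(x)\;:=\;p(x)+\sum_{P\in\mathcal{P}_r}\varepsilon_P f_P(x),\qquad u(x)\;:=\;\tilde{p}(x)-\inf_{y\in\supp\nu}\|x-y\|.
\]
For weights $\varepsilon_P>0$ chosen small enough, $\tilde{p}$ is convex and Lipschitz on $X$, affine on each $P\in\mathcal{P}_f$, and strictly convex on $\interior P$ for each $P\in\mathcal{P}_r$; $u$ is Lipschitz, with $u\leq\tilde{p}$ and $u=\tilde{p}$ exactly on $\supp\nu$.

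The optimality of $\nu$ then follows by the same duality chain as in the proof of Proposition \ref{prop:characterization_strongly_exposed}. First verify $\int u\,d\nu=\int\tilde{p}\,d\mu$: on each $P\in\mathcal{P}_f$, affineness of $\tilde{p}|_P$ and equality of barycenters $r(\nu|_P)=r(\mu|_P)$ yield $\int_P\tilde{p}\,d\nu=\int_P\tilde{p}\,d\mu$; on each $P\in\mathcal{P}_r$, $\nu|_P=\mu|_P$ makes this trivial. Then for any $\lambda\in F_\mu$, $\int u\,d\lambda\leq\int\tilde{p}\,d\lambda\leq\int\tilde{p}\,d\mu=\int u\,d\nu$. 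For uniqueness, equality in the first inequality forces $\supp\lambda\subseteq\supp\nu$, and equality in the second, combined with a martingale-kernel representation of $\lambda\preceq\mu$, forces the dilation kernel $K(\cdot\mid x)$ sending $\lambda$ to $\mu$ to be supported on sets on which $\tilde{p}$ is affine. Strict convexity of $\tilde{p}$ on $\interior P$ for $P\in\mathcal{P}_r$ pins $K(\cdot\mid x)$ to $\delta_x$ there, forcing $\lambda|_P=\mu|_P$; on each $P\in\mathcal{P}_f$, the kernel is trapped in $\bar P$, so $\lambda|_P$ is an atomic measure on the affinely independent set $\supp\nu|_P$ with the same total mass and barycenter as $\nu|_P$, and is therefore equal to it. Combining yields $\lambda=\nu$.

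The main obstacle I expect is the global convexity of $\tilde{p}$. Each bump perturbs the outward-normal derivative of $p$ along the facets of its cell, and too large a $\varepsilon_P$ could drive a slope jump negative. Because distinct cells of a power diagram carry distinct affine pieces, the original jump $(\nabla a_{P'}-\nabla a_P)\cdot\nu$ across a shared facet is strictly positive, and the $f_P$ are Lipschitz with bounded gradients, so uniformly small $\varepsilon_P>0$ preserve non-negativity of the jump. Making this rigorous via convexity along every line segment through $X$, reduced to the finitely many facet-crossing conditions, is the delicate technical point. A secondary subtlety lies in extracting the cell-wise conclusions from the global Jensen-equality argument, where $\mu$-negligibility of cell boundaries is needed.
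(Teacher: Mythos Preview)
Your proposal is correct and takes essentially the same approach as the paper: both add strictly convex Lipschitz bumps from Lemma~\ref{l:existence_strictly_convex} on the full-revelation cells to the piecewise-affine $p$, subtract the distance-to-support penalty to define $u$, and establish uniqueness via the dilation/martingale kernel combined with strict convexity on $\mathcal{P}_r$ and affine independence on $\mathcal{P}_f$. You actually go a bit further than the paper in sketching why sufficiently small $\varepsilon_P$ preserve global convexity of $\tilde{p}$---the paper simply asserts that the constants $k_A$ can be so chosen.
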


The proof of this Corollary (in the Appendix) follows the proof of one
direction in Proposition \ref{prop:characterization_strongly_exposed}, but
adjusts the function $p$ that we construct to be strictly convex on some
cells using Lemma \ref{l:existence_strictly_convex}. 

One can construct Lipschitz objective functionals in which the unique optimal fusion is induced by a partition of \(X\) into an uncountable collection of convex sets, on each of which \(\mu\) is collapsed to its barycenter. A natural open question, therefore, is whether all Lipschitz-exposed points take the form of a (possibly infinite) collection \(\mathcal{P}\) of convex sets (satisfying some regularity condition) on each of which either \(\nu|_P \preceq \mu|_P\) and \(\nu|_P\) has affinely independent support, or \(\nu|_P = \mu|_P\). One challenge is in the precise formalization of these objects, as the mathematical literature primarily studies the case where \(\mathcal{P}\) has countably many elements. Perhaps a clever approximation approach could work.


\subsection{The Gap Between Necessity and Sufficiency}


We now note that there is a small gap between the necessary condition for extreme points identified in Proposition \ref{prop:necessity_extreme} and our sufficient condition for Lipschitz-exposed points. To that end, we present an example illustrating that the necessary condition cannot be strengthened to the partition being a power diagram. We also present an example of an extreme point that is not Lipschitz-exposed. One perspective on the gap is that it is not too large: Straszewicz’s Theorem (Theorem 18.6 in \cite{rockafellar1970}) and Klee's Theorem (\cite{klee1958extremal}) state that the set of exposed points of a compact convex set is, under mild conditions, dense in the set of extreme points; and by forming the convolution of an integrable function with an appropriate smooth mollifier, one can approximate an integrable function with a Lipschitz-continuous one.

Our first example shows that there are Lipschitz-exposed points (which are
therefore extreme points) for which there is no power diagram such that the
support on each cell is affinely independent. This shows that the necessary
conditions derived in Proposition \ref{prop:necessity_extreme} cannot be
strengthened to require that the partition $\P $ of $X$ into convex sets is
a power diagram and the support on each cell is affinely independent.

\begin{example}
Let $\mu $ be the uniform distribution on the rectangle $\left[ 0,2\right]
\times \left[ 0,1\right] $, and let $\nu $ be the fusion obtained by
contracting $\mu $ on $\left[ 1,2\right] \times \left[ 0,1\right] $ to $%
\left( \frac{3}{2},\frac{1}{2}\right) $; on $\left[ 0,1\right] \times \left[
0,\frac{1}{2}\right] $ to $\left( \frac{3}{8},\frac{1}{4}\right) $ and $%
\left( \frac{5}{8},\frac{1}{4}\right) $; and on $\left[ 0,1\right] \times %
\left[ \frac{1}{2},1\right] $ to $\left( \frac{3}{8},\frac{3}{4}\right) $
and $\left( \frac{5}{8},\frac{3}{4}\right) $. This is depicted in Figure \ref%
{fig:example1}.

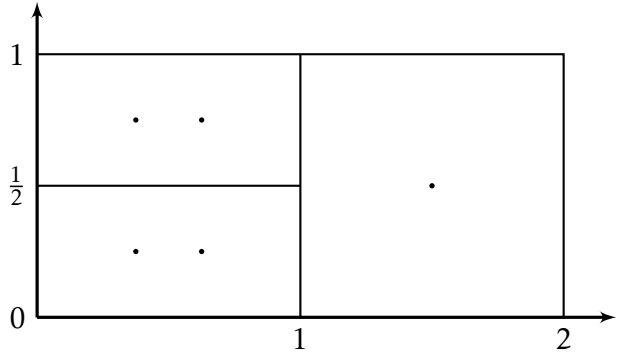
\begin{figure}[tbp]
\centering
\begin{tikzpicture}[x=3.5cm,y=3.5cm]
  \draw[axis,->] (0,0) -- (2.2,0) ;
  \draw[axis,->] (0,0) -- (0,1.2) node[above] {};
  \draw[thick] (0,0) -- (2,0) -- (2,1) -- (0,1) -- (0,0);
  \draw[thick] (1,0) -- (1,1);
  \draw[thick] (0,.5) -- (1,.5);
  \draw (0,0) node[left] {$0$};
  \draw (0,0.5) node[left] {$\frac{1}{2}$};
  \draw (0,1) node[left] {$1$};
  \draw (1,0) node[below] {$1$};
  \draw (2,0) node[below] {$2$};
  \fill (1.5,0.5)  circle[radius=1pt];
  \fill (3/8,0.75) circle[radius=1pt];
  \fill (5/8,0.75) circle[radius=1pt];
  \fill (3/8,0.25) circle[radius=1pt];
  \fill (5/8,0.25) circle[radius=1pt];
\end{tikzpicture}
\caption{Example 1}
\label{fig:example1}
\end{figure}

Observe that the partition of $X$ shown in Figure \ref{fig:example1} is not a polyhedral
subdivision, as the intersection of the upper small rectangle with the large rectangle is not a common face. Moreover, there is no polyhedral subdivision 
$\mathcal{P}$ of $X$ such that, for each $P\in \mathcal{P}$, $\nu
|_{P}\preceq \mu |_{P}$ and the support of $\nu |_{P}$ is affinely
independent.

Nonetheless, one can verify using Proposition \ref%
{prop:characterization_strongly_exposed} that $\nu $ is a Lipschitz-exposed
point of $F_{\mu }$.\footnote{%
To see this, let $p\colon \left[ 0,2\right] \times \left[ 0,1\right]
\rightarrow \mathbb{R}$ be given by $p\left( x\right) =0$ if $x_{1}\leq 1$
and $p\left( x\right) =x_{1}-1$ if $x_{1}>1$. Let 
\begin{equation*}
u\left( x\right) =p\left( x\right) -\inf_{y\in \supp(\nu )}\lVert x-y\rVert 
\text{.}
\end{equation*}%
Since $\int u\left( x\right) \,\mathrm{d}\nu \left( x\right) =\int p\left(
x\right) \,\mathrm{d}\mu \left( x\right) $, $\nu $ solves \ref{eq:util_maxapp}.
Finally, there is no other distribution $\lambda $ such that $\supp\lambda
\subseteq \supp\nu $ and such that $\lambda |_{\left[ 0,1\right] \times %
\left[ 0,1\right] }\preceq \mu |_{\left[ 0,1\right] \times \left[ 0,1\right]
}$ and $\lambda |_{[1,2]\times \left[ 0,1\right] }\preceq \mu |_{[1,2]\times %
\left[ 0,1\right] }$. Suppose there was such a $\lambda $. By symmetry of
the left part of the figure, we could obtain another fusion by mirroring $%
\lambda $ along a horizontal or a vertical line. The convex combination of
the four fusions we obtain this way would equal $\nu $, implying that $\nu $
is not an extreme point. This contradicts the conclusion of Lemma \ref%
{lemma:halfspaces_extreme_points} in the Appendix.}
\end{example}

We now derive weaker conditions that are necessary for a fusion to be the
unique solution. 

\begin{corollary}
\label{cor:convexly_independent} Let $\mu $ be absolutely continuous and
have full support. Suppose $\nu $ has finite support and is a
Lipschitz-exposed point of $F_{\mu }$. Then there is a Power diagram $%
\mathcal{Q}$ such that the support of $\nu |_{Q}$ is convexly independent
for all $Q\in \mathcal{Q}$.
\end{corollary}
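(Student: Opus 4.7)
The plan is to take the power diagram $\mathcal{Q}$ produced by Proposition~\ref{prop:characterization_strongly_exposed} and argue by contradiction: if $\supp(\nu|_Q)$ failed to be convexly independent for some $Q\in \mathcal{Q}$, the convex dependence would provide slack to build a competitor $\nu'\neq \nu$ with $\supp(\nu')\subseteq \supp(\nu)$ and $\nu'|_{Q'}\preceq \mu|_{Q'}$ for every $Q'\in \mathcal{Q}$, violating the uniqueness clause of that proposition.

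First I would invoke Proposition~\ref{prop:characterization_strongly_exposed} to obtain such a $\mathcal{Q}$. Inspecting its proof, $\mathcal{Q}$ refines the partition delivered by Proposition~\ref{prop:necessity_extreme}, on whose cells $P$ one already has $\nu|_P\preceq \mu|_P$. Summing these cell-wise fusions over the finer cells contained in each $Q\in \mathcal{Q}$, and using that the convex order is preserved under disjoint sums, yields $\nu|_Q\preceq \mu|_Q$ for every $Q\in \mathcal{Q}$. This preliminary fact is what will let the local modification below remain a cell-wise fusion of $\mu$.

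Next, assume for contradiction that some $Q_0\in \mathcal{Q}$ has a convexly dependent support, and pick $x_0\in \supp(\nu|_{Q_0})$ with $x_0=\sum_{i=1}^k \alpha_i x_i$ for positive $\alpha_i$ summing to one and $x_i\in \supp(\nu|_{Q_0})\setminus \{x_0\}$. For $\beta>0$ sufficiently small, define
$$\nu':= \nu + \beta\,\delta_{x_0} - \beta\sum_{i=1}^k \alpha_i\,\delta_{x_i}.$$
Then $\nu'$ is a non-negative measure, $\nu'\neq \nu$, and $\supp(\nu')\subseteq \supp(\nu)$. Since $\beta\sum_i \alpha_i \delta_{x_i}$ has barycenter $\beta x_0$, replacing it by $\beta \delta_{x_0}$ is a fusion, so $\nu'\preceq \nu$. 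Because the modification is supported entirely in $Q_0$, we obtain $\nu'|_{Q_0}\preceq \nu|_{Q_0}\preceq \mu|_{Q_0}$, while $\nu'|_Q=\nu|_Q\preceq \mu|_Q$ for $Q\neq Q_0$. Hence $\nu'$ satisfies the hypotheses of the uniqueness implication in Proposition~\ref{prop:characterization_strongly_exposed} yet differs from $\nu$, a contradiction.

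The main obstacle is really the preliminary step: the claim $\nu|_Q\preceq \mu|_Q$ cell by cell is not directly part of the statement of Proposition~\ref{prop:characterization_strongly_exposed} and requires a short detour through its proof. Once that is in hand, the construction of $\nu'$ by transferring a little mass from the $x_i$ (weighted by $\alpha_i$) into the point $x_0$ they convex-combine to is routine, and convex-order compatibility of disjoint sums closes the argument.
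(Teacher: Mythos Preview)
Your argument is correct and is essentially the one the paper has in mind: the statement is labeled a corollary of Proposition~\ref{prop:characterization_strongly_exposed} and no separate proof is given, so the natural route---take the power diagram $\mathcal{Q}$ from that proposition and, if convex independence failed on some cell, shift a little mass from the $x_i$ into $x_0=\sum_i \alpha_i x_i$ to produce a competitor $\nu'\neq\nu$ with $\supp(\nu')\subseteq\supp(\nu)$ and $\nu'|_Q\preceq\mu|_Q$---is precisely what you do.

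One small slip: you write that ``$\mathcal{Q}$ refines the partition delivered by Proposition~\ref{prop:necessity_extreme}.'' The relationship is the reverse: in the proof of the $(\Rightarrow)$ direction of Proposition~\ref{prop:characterization_strongly_exposed}, the convex function $p$ is affine on each cell $P$ of the Proposition~\ref{prop:necessity_extreme} partition, and $\mathcal{Q}$ is the power diagram given by the maximal affinity regions of $p$; hence each $P$ lies in some $Q$, i.e.\ $\mathcal{Q}$ \emph{coarsens} $\mathcal{P}$. Your subsequent sentence (``summing these cell-wise fusions over the finer cells contained in each $Q$'') uses the correct direction, so the mathematics is fine---just flip the word.
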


We use this result in the following example to illustrate a fusion $\nu$
that is an extreme point of $F_{\mu}$ but not an exposed point.

\begin{example}
Let $\mu $ be the uniform distribution on the rectangle $\left[ 0,2\right]
\times \left[ 0,1\right] $, and let $\nu $ be the fusion obtained by
contracting $\mu $ on $\left[ 1,2\right] \times \left[ 0,1\right] $ to $%
\left( \frac{3}{2},\frac{1}{2}\right) $; on $\left[ 0,1\right] \times \left[
0,\frac{1}{2}\right] $ to $\left( \frac{1}{2},\frac{1}{4}\right) $; and on $%
\left[ 0,1\right] \times \left[ \frac{1}{2},1\right] $ to $\left( \frac{1}{2}%
,\frac{7}{10}\right) $ and $\left( \frac{1}{2},\frac{8}{10}\right) $. This
is depicted in Figure \ref{fig:example2}.

The only power diagram such that $\nu|_P\prec \mu|_P$ on each cell $P$ is
the trivial power diagram with only one cell $P=X$. However, since the
support of $\nu$ is not convexly independent, Corollary \ref%
{cor:convexly_independent} implies that $\nu$ is not a Lipschitz-exposed
point. But it follows from Lemma \ref{lemma:halfspaces_extreme_points} that
it is an extreme point.
\end{example}

\begin{figure}[tbp]
\centering
\begin{tikzpicture}[x=3.5cm,y=3.5cm]
  \draw[axis,->] (0,0) -- (2.2,0) ;
  \draw[axis,->] (0,0) -- (0,1.2) node[above] {};
  \draw[thick] (0,0) -- (2,0) -- (2,1) -- (0,1) -- (0,0);
  \draw[thick] (1,0) -- (1,1);
  \draw[thick] (0,.5) -- (1,.5);
  \draw (0,0) node[left] {$0$};
  \draw (0,0.5) node[left] {$\frac{1}{2}$};
  \draw (0,1) node[left] {$1$};
  \draw (1,0) node[below] {$1$};
  \draw (2,0) node[below] {$2$};
  \fill (1.5,0.5)  circle[radius=1pt];
  \fill (.5,0.8) circle[radius=1pt];
  \fill (.5,0.7) circle[radius=1pt];
  \fill (.5,0.25) circle[radius=1pt];
\end{tikzpicture}
\caption{Example 2}
\label{fig:example2}
\end{figure}
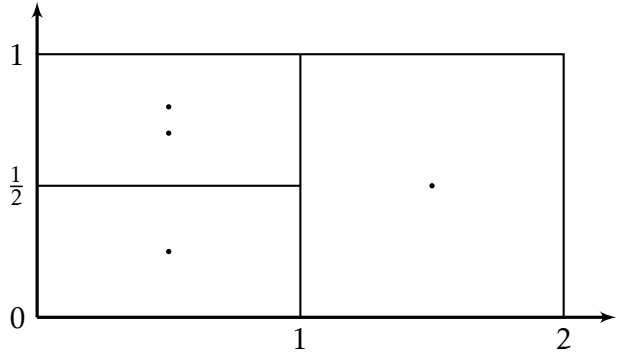

\section{Convex partitional fusions and moment persuasion}\label{section4}

We say that a fusion $\nu $ of $\mu $ is \emph{convex partitional} if there
is a partition of $X$ into convex sets such that, for each cell $P$, the
support $\nu |_{P}$ is a singleton and $\nu |_{P}\prec \mu |_{P}$. An
extreme point of $F_{\mu }$ need not be convex partitional, but convex
partitional fusions are special in the sense that they are the finest
fusions among those with a fixed number of support points. The next
Proposition generalizes the one-dimensional result (Lemma 1) of \cite%
{ivanov2021optimal}:

\begin{proposition}
\label{prop:convexpartfiner} Suppose that $\nu $ is a fusion of $\mu $ with $%
K$ points in its support. Then there exists a convex partitional measure $%
\lambda $ with at most $K$ points in its support that satisfies $\nu \preceq
\lambda \preceq \mu $. If $\nu $ is not convex partitional, then $\nu \prec
\lambda $. 
\end{proposition}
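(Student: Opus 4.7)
I plan to prove the existence of such a $\lambda$ by direct construction, generalizing the one-dimensional quantile argument of \cite{ivanov2021optimal}. In one dimension, given $\nu = \sum_{i=1}^K \nu_i \delta_{x_i}$ with $x_1 < \cdots < x_K$, one partitions $X$ into sub-intervals $I_i = [c_{i-1}, c_i]$ of cumulative $\mu$-mass $\sum_{j \le i}\nu_j$, and sets $\lambda := \sum_i \nu_i \delta_{y_i}$ with $y_i := r(\mu|_{I_i})$. The resulting $\lambda$ is convex partitional with $K$ atoms, $\lambda \preceq \mu$ by Jensen's inequality on each interval, and $\lambda \succeq \nu$ by a one-dimensional majorization argument (the $y_i$'s are weakly more extreme than the $x_i$'s in the sorted order).

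For the multi-dimensional case, I would replace the interval partition with a power-diagram partition obtained via semi-discrete optimal transport (e.g., as described in the framework of \cite{galichon2018optimal}): there exist weights $w_1, \dots, w_K$ so that the power diagram $\{P_i\}$ with sites $x_i$ and weights $w_i$ satisfies $\mu(P_i) = \nu_i$. Setting $y_i := r(\mu|_{P_i})$ and $\lambda := \sum_i \nu_i \delta_{y_i}$, one obtains a convex partitional candidate with at most $K$ atoms, and $\lambda \preceq \mu$ follows from Jensen on each convex cell.

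The crucial step is the convex-order inequality $\nu \preceq \lambda$. I would establish this via the Brenier potential $\phi\colon X \to \mathbb{R}$ for the quadratic-cost optimal transport from $\mu$ to $\nu$, which is convex, piecewise-affine with slope $x_i$ on $P_i$, and admits the dual representation $\phi(x) = \max_j (x \cdot x_j - c_j)$. The inequality $\int \phi \, d\mu \ge \int \phi \, d\nu$ coming from $\nu \preceq \mu$, combined with the piecewise-affine structure, yields a quantitative bound $\sum_i \nu_i\, x_i \cdot (y_i - x_i) \ge 0$, from which a Strassen-kernel construction establishes both the containment $\supp(\nu) \subseteq \ch(\supp(\lambda))$ and the non-negative mixing coefficients that witness $\nu \preceq \lambda$.

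The strict dominance $\nu \prec \lambda$ when $\nu$ is not convex partitional is immediate from the construction: in that case some $y_i \neq x_i$ (otherwise $\{x_i\}$ would be the barycenters of the power-diagram cells, making $\nu$ itself convex partitional), so $\lambda \neq \nu$, and combined with $\nu \preceq \lambda$ this gives $\nu \prec \lambda$. The main obstacle is the upgrade from the scalar inequality $\sum_i \nu_i\, x_i \cdot (y_i - x_i) \ge 0$ to the full convex-order relation $\nu \preceq \lambda$: one must verify that $\sum_i \nu_i V(y_i) \ge \sum_i \nu_i V(x_i)$ holds for every convex $V$, which requires exploiting the full geometric structure of the Brenier decomposition rather than just its first-order information at the sites.
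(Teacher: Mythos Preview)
Your proposal has a genuine gap: the crucial claim $\nu \preceq \lambda$ is \emph{false} in general for the power-diagram construction, not merely hard to prove. Here is a two-site counterexample. Let $\mu$ be uniform on the triangle $T$ with vertices $(0,0)$, $(1,0)$, $(0,1)$, so $r(\mu)=(\tfrac13,\tfrac13)$. For small $t>0$ set $x_1=(\tfrac13-t,\tfrac13)$, $x_2=(\tfrac13+t,\tfrac13)$ and $\nu=\tfrac12\delta_{x_1}+\tfrac12\delta_{x_2}$. One checks $\nu\preceq\mu$ for small $t$ (take $\mu_1=\mu+\pi$, $\mu_2=\mu-\pi$ with $\pi(X)=0$, $\int x\,d\pi=(-t,0)$, $|\pi|\le\mu$). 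The semi-discrete OT power diagram has boundary perpendicular to $x_2-x_1$, hence a vertical line $x=c$ with $c=1-1/\sqrt2$ so that $\mu(P_1)=\mu(P_2)=\tfrac12$. Direct computation gives
\[
y_1=r(\mu|_{P_1})=\Bigl(\tfrac{1-2c}{3},\tfrac{1+c}{3}\Bigr),\qquad
y_2=r(\mu|_{P_2})=\Bigl(\tfrac{1+2c}{3},\tfrac{1-c}{3}\Bigr),
\]
whose second coordinates are $\approx 0.431$ and $\approx 0.236$, so the segment $[y_1,y_2]$ meets the horizontal line $\{y=\tfrac13\}$ only at the barycenter $(\tfrac13,\tfrac13)$. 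Since $\lambda=\tfrac12\delta_{y_1}+\tfrac12\delta_{y_2}$ has two-point support, any Strassen kernel witnessing $\nu\preceq\lambda$ would force $x_1,x_2\in[y_1,y_2]$; but $x_1,x_2$ lie on $\{y=\tfrac13\}$ and are distinct from $(\tfrac13,\tfrac13)$, so $\nu\not\preceq\lambda$. Thus the scalar inequality $\sum_i\nu_i\,x_i\cdot(y_i-x_i)\ge 0$ that you derive cannot be upgraded to the convex order---the ``full geometric structure of the Brenier decomposition'' you invoke simply does not deliver it.

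The paper's proof proceeds along an entirely different, non-constructive route: it applies Zorn's lemma to the set of fusions of $\mu$ with at most $K$ atoms that dominate $\nu$, obtaining a $\preceq$-maximal element $\lambda$. If $\lambda$ were not convex partitional, two of its Cartier pieces $\lambda_i,\lambda_j$ would have overlapping support interiors; Lemma~\ref{lemma:shifting} then lets one push the atoms $x_i,x_j$ strictly further apart along the line through them while preserving their joint barycenter and keeping the result a fusion of $\mu$, contradicting maximality. No explicit $\lambda$ is ever exhibited. If you want a constructive argument, you would need a different rule for choosing the partition---one that uses the Cartier decomposition of $\mu$ coming from $\nu\preceq\mu$, not merely the masses $\nu_i$ and sites $x_i$.
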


In Section \ref{section5}, we apply this proposition to categorization. Under processing or memory constraints in which a decision maker is constrained to arrange data in a number of finite ``bins,'' there always exists an optimal arrangement in which the data is categorized according to convex partitions of the state space.


\subsection{Moment Persuasion}\label{sec:moment_persuasion}

A special class of multidimensional Bayesian persuasion problems are those
in which the sender's payoff can be written in a reduced-form way as a
function of the receiver's vector of posterior means. That is, given $G\in
\Delta \left( \Theta \right) $ with 
\begin{equation*}
\int_{\Theta _{1}}adG_{1}\left( a\right) =x_{1}, \int_{\Theta
_{2}}adG_{2}\left( a\right) =x_{2},\dots, \int_{\Theta _{d}}adG_{d}\left(
a\right) =x_{d}\text{,}
\end{equation*}
there exists a measurable function $V \colon \Theta \to \mathbb{R}$ such
that the sender's payoff is $V(x)$, where $x = (x_1,\dots,x_d)$. We impose
that $V$ is Lipschitz on $\Theta$.

Given this reduced-form payoff $V$, the receiver solves 
\begin{equation*}
\max_{\nu \in F_{\mu }}\mathbb{E}_{\nu}[V]\text{.}
\end{equation*}%
If the receiver has only finitely many actions, i.e., if $\left\vert
A\right\vert =m\in \mathbb{N}$, then there is a solution to the persuasion
problem where the optimal distribution of posterior means $\nu^{\ast }$ has
at most $m$ points in its support.

Proposition \ref{prop:necessity_extreme} suggests then the following
interpretation of an optimal solution to a persuasion problem. For any
finitely supported extreme point $\nu $ of $F_{\mu }$, we define $\hat{\nu}%
,\ $the \emph{convex-partitional coarsening} of $\nu $, to be the fusion of $%
\nu \ $obtained by taking each element $P\in \mathcal{P}$ - the underlying
partition supporting $\nu \ $(see Proposition \ref{prop:necessity_extreme})
- and collapsing $\mu \ $on it to its barycenter, $r_P(\mu)$. By
construction, $\hat{\nu}\in F_{\nu }\subseteq F_{\mu }$ since the convex
order is transitive. Moreover, $\nu =\hat{\nu}$ if and only if $\nu $ is
itself convex partitional.

Let $D_{\mu ,P}$ denote the set of \emph{mean-preserving spreads} of $\delta
_{r_{P}\left( \mu \right) }$ that are supported on $P$. The \emph{%
unconstrained persuasion problem} on $P$ is 
\begin{equation*}
\max_{\nu \in D_{\mu ,P}}\mathbb{E}_{\nu}V\text{.}
\end{equation*}
We say that an unconstrained persuasion problem is \emph{standard} if $P$ is
a simplex. As any simplex is homeomorphic to the standard simplex, any
standard unconstrained persuasion problem is equivalent to a problem of the
class studied in \cite{kamenicagentzkow2011}.

We define the upper concave envelope of $V$ on $P$, $\cav_P V$, to be the
smallest concave function that lies pointwise above $V$ on $P$.
Alternatively, we call this the concavification of $V$ on $P$. We say that a
solution to the persuasion problem, $\nu^*$, is \emph{canonical} if on each
cell $P$ of the corresponding partition $\mathcal{P}^*$, $\nu^{*}|_{P}$ is
the solution of an unconstrained persuasion problem on $P$. Moreover, the
sender's payoff is 
\begin{equation*}
\sum_{P \in \mathcal{P}^*} \mu\left(P\right) \cav_P
V\left(r_{P}\left(\mu\right)\right)
\end{equation*}

\begin{proposition}
\label{prop:canonical} There exists a canonical solution to the persuasion
problem.
\end{proposition}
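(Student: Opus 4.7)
The plan is to extract the canonical structure directly from the solution of a dual problem. First, an optimum $\nu^*$ exists because $F_\mu$ is weak$^*$ compact and $\nu \mapsto \int V \, d\nu$ is continuous (as $V$ is Lipschitz on the compact set $X$).

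Next, I would apply the duality results of Dworczak and Kolotilin already used in the proof of Proposition \ref{prop:characterization_strongly_exposed}. Since $V$ is Lipschitz, their Proposition 4 ensures that the moment persuasion problem is regular, and their Theorem 3 then produces a convex function $p^* \colon X \to \R$ satisfying $p^* \geq V$ on $X$, $p^* = V$ on $\supp \nu^*$, and $\int p^* \, d\mu = \int V \, d\nu^*$. Let $\mathcal{P}^*$ be the partition of $X$ into the maximal (necessarily convex) subsets on which $p^*$ is affine.

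Combining $p^* = V$ on $\supp \nu^*$ with strong duality yields $\int p^* \, d\nu^* = \int V \, d\nu^* = \int p^* \, d\mu$. Disintegrate $\mu$ by a Cartier-type decomposition $\mu = \int \mu_x \, d\nu^*(x)$ with $r(\mu_x) = x$ for $\nu^*$-a.e.\ $x$. The identity $\int p^* \, d\nu^* = \int p^* \, d\mu$, combined with convexity of $p^*$ and Jensen's inequality, forces $p^*(x) = \int p^* \, d\mu_x$ on $\nu^*$-a.e.\ $x$, so $p^*$ is affine on $\ch(\supp \mu_x)$ for $\nu^*$-a.e.\ $x$. Every such fiber is therefore contained in a single cell of $\mathcal{P}^*$, which implies $\nu^*|_P \preceq \mu|_P$ for every $P \in \mathcal{P}^*$.

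Finally, fix $P \in \mathcal{P}^*$. For any $\lambda \in D_{\mu, P}$, affineness of $p^*|_P$ and the fact that $\lambda$ has barycenter $r_P(\mu)$ yield $\int V \, d\lambda \leq \int p^* \, d\lambda = p^*(r_P(\mu))$, while $\int V \, d(\nu^*|_P) = \int p^* \, d(\nu^*|_P) = \mu(P) \, p^*(r_P(\mu))$. Hence $\nu^*|_P / \mu(P)$ attains the unconstrained optimum $\cav_P V(r_P(\mu)) = p^*(r_P(\mu))$, and summing over cells gives the sender's payoff $\sum_{P \in \mathcal{P}^*} \mu(P) \, \cav_P V(r_P(\mu))$, so $\nu^*$ is canonical. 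The main obstacle is the third paragraph: using the complementary-slackness identity to show that each fiber of the Cartier decomposition lies in a single cell of $\mathcal{P}^*$, so that the restrictions $\nu^*|_P$ are themselves fusions of $\mu|_P$---this is the place where strong duality, rather than mere weak duality, does the essential work.
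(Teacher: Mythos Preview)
Your route is genuinely different from the paper's and in several respects more direct: the paper first invokes Bauer and Proposition~\ref{prop:necessity_extreme} to produce a finite convex partition with affinely independent local supports, then uses Proposition~\ref{prop:convexpartfiner} and a contradiction argument based on Dworczak--Kolotilin's Theorem~5. You instead go straight to the dual certificate $p^*$ from their Theorem~3, let the maximal affine pieces of $p^*$ define the partition, and read off the canonical structure from complementary slackness. The third paragraph---which you flag as the main obstacle---is in fact fine: equality in $\int p^*\,d\nu^*=\int p^*\,d\mu$ together with Jensen does force $p^*$ to be affine on $\ch(\supp\mu_x)$ for $\nu^*$-a.e.\ $x$, and since $x=r(\mu_x)$ lies in that same convex set, each fiber sits inside one cell, giving $\nu^*|_P\preceq\mu|_P$.

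The genuine gap is earlier and structural: you never use the hypothesis that the receiver has finitely many actions, and without it nothing guarantees that $p^*$ is piecewise affine or that the collection of ``maximal convex subsets on which $p^*$ is affine'' is a finite partition. If $p^*$ is strictly convex on a region of positive $\mu$-measure, those maximal sets are singletons, $r_P(\mu)$ is undefined, and the final sum $\sum_{P\in\mathcal{P}^*}\mu(P)\cav_P V(r_P(\mu))$ is vacuous. The paper's notion of a canonical solution---and the payoff formula you are trying to reach---implicitly requires a finite convex partition with $\mu(P)>0$ on each cell. To close the gap you should first choose $\nu^*$ to be a finitely supported extreme optimum (Bauer plus $|A|<\infty$), and then argue, exactly as in the $(\Rightarrow)$ direction of Proposition~\ref{prop:characterization_strongly_exposed}, that the dual $p^*$ is affine on each cell of the finest Proposition~\ref{prop:necessity_extreme} partition and hence piecewise affine with finitely many pieces. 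Once that is in place, your last two paragraphs go through verbatim and yield a cleaner argument than the paper's, since you never need Proposition~\ref{prop:convexpartfiner} or the contradiction via $\mathcal{H}^*$.
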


\begin{proof}
By Bauer's maximum principle, there is a solution to the persuasion problem, \(\nu^*\), that is an extreme point. As the receiver only has finitely many actions, \(\nu^*\) can be taken to have finitely many support points. Let \(\mathcal{P}^*\) be a convex partition of \(\Theta\) corresponding to \(\nu^*\) such that there is no strictly finer partition corresponding to \(\nu^*\).

Suppose that some \(P \in \mathcal{P}^*\)  
\(\nu^{*}|_{P}\) has multiple points of support (or else we are done). As there is no strictly finer partition corresponding to \(\nu^{*}\), by Proposition \ref{prop:convexpartfiner}, there exists a \(\lambda^{*}|_{P}\) such that \(\nu^{*}|_{P} \prec \lambda^{*}|_{P} \preceq \mu|_{P}\). Moreover, as \(\supp \nu^{*}|_{P}\) is affinely independent, and as \(\nu^{*}|_{P} \prec \lambda|_{P}\) implies that \(\conv \supp \nu^{*}|_{P} \subseteq \conv \supp \lambda^{*}|_{P}\), Proposition \ref{prop:convexpartfiner} implies that \(\supp \lambda^{*}|_{P}\) is affinely independent.

By construction, \(\nu^{*}|_{P}\) is a solution to \(\max_{\nu \in F_{\lambda^{*}|_{P}}}\mathbb{E}_{\nu}V\). This is a standard unconstrained persuasion problem, whose solution is given by the concavification of the persuader's value function on \(\conv \supp \lambda^{*}|_{P}\) (\cite{kamenicagentzkow2011}). Let \(\mathcal{H}^*(x)\) be the set of hyperplanes supported on \(\supp \nu^{*}|_{P}\) that lie weakly above \(V\) on \(\conv \supp \lambda^{*}|_{P}\).

Suppose for the sake of contradiction that there exists an \(x \in P\) for which \(H^*(x) < V(x)\) for all \(H^*(x) \in \mathcal{H}^*\). By Theorem 5 in Dworczak and Kolotilin, strong duality holds for multi-dimensional moment persuasion, and as \(\nu^*\) is an optimal fusion of \(\mu\) on \(\Theta\), \(\nu^{*}|_{P}\) is an optimal fusion of \(\mu|_{P}\) (on \(P\)). This implies that there exists a convex piecewise-affine function \(q_P\) that lies everywhere above \(V\) on \(P\) and such that \(q_P(y) = V(y)\) for all \(y \in \supp \nu^{*}|_P\). However, \(\int q_P d\mu|_P = \int V d\nu^{*}|_P\) only if \(q_P\) is affine on \(P\), by our assumption that there is no strictly finer partition, a contradiction. \end{proof}

This result indicates that the solution to a moment persuasion problem (in
which finitely many different actions are induced) can be decomposed into
two parts. First, in the \textbf{global} portion of the problem, the domain
is partitioned into finitely many convex sets. Second, in the \textbf{local}
portion of the problem, the persuader solves an unconstrained problem on
each of these convex sets, whose solution is given by the concavification of
the value function. As Proposition \ref{prop:canonical} states, there always
exists such a canonical solution to a moment-persuasion problem.

This result also elucidates the bi-pooling result of \cite{kleiner2021extreme}
and \cite{arieli2023optimal}: any moment-persuasion problem with an interval
state space admits a bi-pooling solution. When the optimal solution has a
finite support, this means that the interval is partitioned into finitely
many intervals such that, on each of them, the prior is either collapsed to
its barycenter, or is fused into two points. This is precisely the
single-dimensional version of Proposition \ref{prop:necessity_extreme}.

\section{An Application to Categorization}\label{section5}

Categorization, or the arrangement of objects according to some rule, is a central part of machine learning and our interaction with the world more broadly. An influential work that studies this formally is \cite{gardenfors2004conceptual}. There, a central idea is the notion of a conceptual space--in economics, a state space--and its division into convex sets, each containing a prototype, i.e., a central object. For example, one can think of colors: ``red'' describes a whole family of colors, as does ``blue'' and so on. \citeauthor{gardenfors2004conceptual} notes several ways of producing this phenomenon: one can define a \emph{natural property} as a convex region of the conceptual space, in which case the prototype in each region is the central object in each region. He also notes that one can instead start with a finite set of prototypes and then partition the conceptual space by grouping points together that are closest to each prototype. When the notion of closeness is the Euclidean distance, this corresponds to a particular type of convex partition of the conceptual space, a Voronoi tessellation.

These constructions seem reasonable and realistic, yet the central properties are exogenous. That is, convexity of the regions, or existence of the prototypes, is/are assumed. Here, we use our earlier results to argue that these properties are natural outcomes of a decision-maker's (DM's) optimization when she faces limits on the amount of information she can acquire or process. We provide two different micro-foundations for categorization. In both, the outcome is a convex partition of the state space in which there exists a single representative object (prototype) per partition element. In the first derivation, we show that this categorization emerges as information becomes sufficiently cheap when a decision-maker (DM) acquires information. In the second derivation, we show that this phenomenon emerges when a DM must instead store information in finitely many bins. We show that the optimal way of doing so is precisely categorization, i.e., by classifying information into convex bins.

\subsection{Decision Problems with Flexible Information Acquisition}

Let the state of the world $x \in X \equiv \left[0,1\right]^{d}$ be distributed according to absolutely continuous \(\mu\). There is a risk-neutral DM with a bounded utility function, who must choose action \(a\) from some finite set of undominated actions \(A\), $\left|A\right| = t$ (\(\infty > t \geq 2\)). As the maximum of Lipschitz-continuous functions is Lipschitz-continuous, the agent's decision problem induces a Lipschitz-continuous (piecewise-affine) reduced-form value function, \(V\), of the posterior mean.

Prior to her decision, the DM acquires (or processes) information. We assume that this can be modeled as follows. There is a cost functional $C\colon F_\mu \to \mathbb{R}$, where
\[C\left(\nu\right) = \kappa \int c d\nu \text{ ,}\]
with \(\kappa > 0\) and \(c\) some strictly convex, Lipschitz-continuous, twice continuously differentiable function whose partial derivatives are bounded on \(X\); i.e., acquiring any fusion \(\nu \in F_{\mu}\) costs the DM $C\left(\nu\right)$. The DM solves
\[\max_{\nu \in F_\mu} \int (V-\kappa c)d\nu \text{.}\]

\begin{remark}\label{remark:cheapcat}
For any decision problem, cost function, and prior of the form described above, there exists a \(\bar{\kappa} > 0\) such that if the cost parameter satisfies \(\kappa \leq \bar{\kappa}\), there exists a solution to the DM's information acquisition problem that corresponds to categorization with a single prototype per category.
\end{remark}
\begin{proof}
Fix a decision problem, cost function, and prior of the form described above. Let \(c_i\) denote the partial derivative of \(c\) in \(x_i\): \(c_i(x) \equiv \frac{\partial}{\partial{x_i}} c\). Define \[\alpha \equiv \max_{x, y \in \left[0,1\right]^{d}; i = 1, \dots, d}{\left|c_i\left(x\right) - c_i\left(y\right)\right|} \text{ ,}\]
i.e., \(\alpha\) is the upper bound for the largest possible (absolute) difference between marginal costs of beliefs. Because the partial derivatives of \(c\) are bounded and \(c\) is strictly convex, \(\alpha \in \mathbb{R}_{++}\). Similarly, define 
\[\beta \equiv \min_{x, y \in \left[0,1\right]^{d}; i = 1, \dots, d}{\left\{\left| V_i\left(x\right) - V_i\left(y\right)\right| \ \colon \ \left| V_i\left(x\right) - V_i\left(y\right)\right| > 0\right\}} \text{ .}\]
Because there are only finitely many actions and all are undominated, \(\beta \in \mathbb{R}_{++}\). Finally, let \(\kappa < \beta/\alpha\).

By Proposition 4, there exists a canonical solution, \(\nu\), to the persuasion problem. Suppose for the sake of contradiction that on one of the elements \(P \in \mathcal{P}\), \(\nu\) has multiple (\(m \geq 2\)) points of support. 
This collection of points $\left(x_1, \dots, x_m\right)$ must be such that for each $j = 1, \dots, d$, and for any $i, k \in \left\{1,\dots,m\right\}$, 
\[V_j\left(x_i\right) - V_j\left(x_k\right) = \kappa\left(c_j\left(x_i\right) - c_j\left(x_k\right)\right) \text{ .}\]
By the optimality of \(\nu\) and the strict convexity of \(c\), because \(\nu\) has multiple support points on \(P\), the DM must take at least two different actions with strictly positive probability. Therefore, for at least one trio $j, i, k$ the left-hand side of this equation is nonzero and so
\[\kappa = \frac{V_j\left(x_i\right) - V_j\left(x_k\right)}{c_j\left(x_i\right) - c_j\left(x_k\right)} \geq \frac{\beta}{\alpha}\text{,}\]
a contradiction.\end{proof}

\subsection{Decision Problems with Finite Memory}

We now drop the assumption that the DM's problem induces a mean-measurable value function\footnote{By the Stone-Weierstrass theorem, this assumption earlier is innocuous.} and also jettison the specification that the cost of information acquisition is smooth. Instead, the DM has access to \(K \in \mathbb{N}\) ``bins'' or categories, and she chooses a stochastic mapping that assigns states to bins. When choosing her action $a$, the DM learns only to which bin the realized state was mapped, not the actual state. Importantly, we do not assume that the \(K\) bins correspond to a partition, much less a convex one. For instance, the DM could assign any \(x \in X\) to one of the bins (uniformly) at random. The DM's problem becomes
\begin{align*}
    &\max_{\nu\in F_{\mu}} \int V\,\mathrm d\nu\\
    &\text{s.t. } |\supp \nu|\le K.
\end{align*} 

Since information is always beneficial to a DM, Proposition \ref{prop:convexpartfiner} can be applied to conclude the following:
\begin{remark}
    There is an optimal categorization that is convex partitional. If the number of undominated actions in the decision problem is weakly greater than the number of possible categories, any optimal categorization must be convex partitional.
\end{remark}

\subsection{Decision-Making With Complexity Constraints}

Suppose there is a finite state space \(\Theta\). There is a DM with a compact set of actions \(A\) and a continuous utility function \(u \colon A \times \Theta \to \mathbb{R}\). The Bayesian DM has a prior, \(\eta_0 \in int \Delta\left(\Theta\right)\), and observes information before taking a decision. Formally, she observes the realization of a signal \(\pi \colon \Theta \to \Delta\left(S\right)\), where \(S\) is a compact set of signal realizations, before choosing an action \(a \in A\). We let \(\mu \in \Delta \Delta\left(\Theta\right)\) denote the distribution over posteriors \(\eta\) induced by the prior and the signal.

We specify that the DM is constrained in the following sense: there exists some \(K \in \mathbb{N}\) and the DM is constrained to optimize over decision rules \(\sigma \colon S \to \Delta\left(A\right)\) with support of at most \(K\) points. We call such a decision rule a simplicity-constrained decision rule. It is immediate that this is equivalent to the DM being restricted to choose a fusion of \(\mu\), \(\nu\), that is supported on at most \(K\) points before choosing an arbitrary (unconstrained) decision rule. We say that a simplicity-constrained decision rule is convex partitional if \(\nu\) is a convex-partitional fusion of \(\mu\). Then, Proposition \ref{prop:convexpartfiner} gives us
\begin{remark}
    There is an optimal simplicity-constrained decision rule that is convex partitional. If the number of undominated actions in the decision problem is weakly greater than the number of support points of \(\mu\), any optimal simplicity-constrained decision rule must be convex partitional.
\end{remark}


\appendix

\section{Omitted Proofs}\label{proofs}

\subsection{Proof of Proposition \protect\ref{prop:necessity_extreme}}



We first need several auxiliary Lemmas:

\begin{lemma}
\label{lemma:cartier} (\cite{cartier1964comparaison}): Let $\mu ,\nu $ be
positive measures on a compact, convex set $X$ such that $\nu
=\sum\limits_{i=1}^{n}\alpha _{i}\delta _{x_{i}}$ where$\
\sum\limits_{i=1}^{n}\alpha _{i}=1\ $\ and $\alpha _{i}>0$ for each $i.$
Then $\nu \preceq \mu $ if and only if there exist positive measures $\mu
_{1},\dots ,\mu _{n}$ such that, for all $i,$ the barycenter of $\mu _{i}$
equals $x_{i},$ and such that $\mu =\sum\limits_{i=1}^{n}\alpha _{i}\mu _{i}$
( in particular $\delta _{x_{i}}\preceq \mu _{i}$ for all $i$).
\end{lemma}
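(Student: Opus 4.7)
The plan is to prove Cartier's Lemma by a Hahn-Banach separation argument carried out in the space of finite signed Radon measures on $X$ equipped with the weak$^*$ topology (as the dual of $C(X)$).

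The direction $(\Leftarrow)$ is immediate from Jensen's inequality: if $\mu = \sum_i \alpha_i \mu_i$ with each $\mu_i$ a probability measure of barycenter $x_i$, then for any convex $\phi$ with the requisite integrability,
\[\int \phi \, d\mu \;=\; \sum_i \alpha_i \int \phi \, d\mu_i \;\geq\; \sum_i \alpha_i \phi(x_i) \;=\; \int \phi \, d\nu,\]
which is $\nu \preceq \mu$. The ``in particular'' clause $\delta_{x_i} \preceq \mu_i$ is the same Jensen inequality applied individually to each $\mu_i$.

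For the direction $(\Rightarrow)$, taking $\phi \equiv \pm 1$ in the definition of the convex order forces $\mu(X) = \nu(X) = 1$, so $\mu$ is a probability measure as well. Let $\mathcal{B}_i$ denote the set of Borel probability measures on $X$ with barycenter $x_i$. Each $\mathcal{B}_i$ is convex and weak$^*$-closed (as the intersection of the zero sets of the weak$^*$-continuous functionals $\mu' \mapsto \mu'(X) - 1$ and $\mu' \mapsto \int y_j \, d\mu'(y) - (x_i)_j$ for each coordinate $j$), hence weak$^*$-compact by Banach-Alaoglu and the compactness of $X$. Consequently the set
\[C \;=\; \left\{\sum_{i=1}^n \alpha_i \mu_i : \mu_i \in \mathcal{B}_i \text{ for each } i\right\}\]
is the image of $\prod_i \mathcal{B}_i$ under the continuous linear map $(\mu_1, \ldots, \mu_n) \mapsto \sum_i \alpha_i \mu_i$, and therefore is convex and weak$^*$-compact. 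I want to show $\mu \in C$. Suppose not; then by the Hahn-Banach separation theorem in $(C(X)^*, \text{weak}^*)$ there exist $\phi \in C(X)$ and $\beta \in \mathbb{R}$ with
\[\int \phi \, d\mu \;<\; \beta \;<\; \inf_{\lambda \in C}\int \phi \, d\lambda \;=\; \sum_{i=1}^n \alpha_i \inf_{\mu_i \in \mathcal{B}_i} \int \phi \, d\mu_i.\]
The crux is the identity $\inf_{\mu' \in \mathcal{B}_i} \int \phi \, d\mu' = \check{\phi}(x_i)$, where $\check{\phi}$ denotes the greatest lower semi-continuous convex minorant of $\phi$ on $X$. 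Granting this, $\check{\phi}$ is convex, bounded, and Borel measurable, so applying the hypothesis $\nu \preceq \mu$ to $\check{\phi}$ yields
\[\beta \;<\; \sum_i \alpha_i \check{\phi}(x_i) \;=\; \int \check{\phi} \, d\nu \;\leq\; \int \check{\phi} \, d\mu \;\leq\; \int \phi \, d\mu \;<\; \beta,\]
a contradiction, so $\mu \in C$ and the required decomposition exists.

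The main obstacle is verifying the identity $\inf_{\mu' \in \mathcal{B}_i} \int \phi \, d\mu' = \check{\phi}(x_i)$. The direction $(\geq)$ is Jensen applied to the convex function $\check{\phi} \leq \phi$: for any $\mu' \in \mathcal{B}_i$, $\int \phi \, d\mu' \geq \int \check{\phi} \, d\mu' \geq \check{\phi}(x_i)$. For the reverse direction $(\leq)$, I would invoke the finite-dimensional formula $\check{\phi}(x) = \inf\{\sum_j \lambda_j \phi(y_j) : y_j \in X,\ \lambda_j \geq 0,\ \sum_j \lambda_j = 1,\ \sum_j \lambda_j y_j = x\}$, valid by Carath\'eodory's theorem for continuous $\phi$ on a compact convex set; each finite convex combination appearing on the right corresponds to a finitely-supported element of $\mathcal{B}_i$, supplying the desired upper bound. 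A minor technical subtlety is that $\check{\phi}$ can fail to be continuous on the boundary of $X$, but its lower semi-continuity, boundedness, and convexity are enough to justify every step in the chain of inequalities above.
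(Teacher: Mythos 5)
Your proposal is correct, but it is worth noting that the paper itself does not prove this lemma at all: it is quoted as a known result of Cartier, Fell, and Meyer (1964), which is a finitely-supported special case of the Cartier--Strassen dilation theorem. What you supply is a self-contained proof by weak$^*$ compactness and Hahn--Banach separation, which is in the spirit of the classical argument and goes through: the sets $\mathcal{B}_i$ are indeed nonempty (each contains $\delta_{x_i}$), convex and weak$^*$-compact, so $C$ is compact convex, the dual of $(C(X)^*,\mathrm{weak}^*)$ is $C(X)$, and the separation step is legitimate. A few points you should tighten. First, for the contradiction you only need the inequality $\inf_{\mu'\in\mathcal{B}_i}\int\phi\,d\mu'\le\check{\phi}(x_i)$, which follows from discrete measures and the Carath\'eodory representation of the convex envelope; the reverse inequality (Jensen for the lsc convex minorant, via Fenchel--Moreau extension by $+\infty$ outside $X$) is not actually needed. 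Second, the measurability issue you flag is resolved by observing that, for continuous $\phi$ on compact convex $X\subseteq\mathbb{R}^n$, the Carath\'eodory infimum is attained, which makes the convex envelope lower semicontinuous, hence Borel and bounded; this matters because the paper's definition of $\preceq$ quantifies over all convex functions whose integrals exist, not only continuous ones, so applying the hypothesis to $\check{\phi}$ is legitimate. Third, in the $(\Leftarrow)$ direction you implicitly take the $\mu_i$ to be probability measures; this is the intended reading of the lemma (equivalently $\alpha_i\mu_i(X)=\nu(\{x_i\})$), since with the paper's normalized barycenter and unnormalized masses the ``if'' direction would fail, and it is also the normalization used wherever the lemma is invoked in the paper. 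With these clarifications your argument stands as a complete alternative to citing the literature, at the cost of about a page of functional-analytic machinery.
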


For any two measures $\mu$ and $\nu$ on $X$, $\mu\le \nu$ denotes the
pointwise order defined by $\mu(A)\le \nu(A)$ for all measurable $A\subseteq
X$.

\begin{lemma}
\label{lemma:new_measure} Let $\mu $ be a positive measure supported on $X$
and let $y\in \interior(\ch(\supp(\mu )))$. Then there exists a positive
measure $\pi $ such that $\pi \leq \mu $, $\pi (X)>0$, and $r(\pi )=y$.
\end{lemma}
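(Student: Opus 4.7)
The plan is to construct $\pi$ explicitly as a nonnegative combination of $\mu$ restricted to small disjoint balls centered at finitely many points of $\supp(\mu)$ whose convex hull contains $y$ in its (relative) interior. Working within the affine hull of $\supp(\mu)$, so that ``interior'' becomes the topological interior in that space, I would first produce, via a Carath\'eodory-type argument, finitely many points $x_1, \dots, x_m \in \supp(\mu)$ and strictly positive weights $\alpha_1, \dots, \alpha_m$ summing to $1$ such that $y = \sum_i \alpha_i x_i$ and $\{x_i\}$ affinely spans $\aff(\supp(\mu))$. Concretely, one places a small non-degenerate simplex $\{p_0, \dots, p_n\}$ around $y$ inside $\ch(\supp(\mu))$, writes each $p_j$ as a convex combination of points of $\supp(\mu)$ via Carath\'eodory, forms $y = \sum_j \lambda_j p_j$ with $\lambda_j > 0$, and pools (then prunes the zero-weight) terms.

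Next, I would choose pairwise disjoint open balls $B_i \ni x_i$, each satisfying $\mu(B_i) > 0$ (which holds since $x_i \in \supp(\mu)$), and set $b_i := r(\mu|_{B_i})$. As the radii shrink, $b_i \to x_i$. The key technical step is to show that, for sufficiently small balls, $y$ still admits a strictly positive convex representation $y = \sum_i \beta_i b_i$. Writing $\beta_i = \alpha_i + \gamma_i$ reduces the task to solving the linear system $\sum_i \gamma_i = 0$ and $\sum_i \gamma_i b_i = \sum_i \alpha_i(x_i - b_i)$. The right-hand side can be made arbitrarily small by shrinking the balls, and surjectivity of the system is guaranteed as long as $\{b_i\}$ affinely spans $\aff(\supp(\mu))$, an open condition that persists under small perturbations of an affinely spanning $\{x_i\}$. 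Hence a small $\gamma$ exists, and each $\beta_i$ remains strictly positive.

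With such $\beta_i$ in hand, I would set
\[
\pi := \sum_{i=1}^m \frac{s\beta_i}{\mu(B_i)}\, \mu|_{B_i}
\]
for any $s \in (0,\, \min_i \mu(B_i)/\beta_i]$. Because the $B_i$ are disjoint and each scalar coefficient lies in $(0,1]$, one has $\pi \le \mu$; trivially $\pi(X) = s > 0$; and $r(\pi) = \sum_i \beta_i b_i = y$ by construction. The main obstacle in this approach is the perturbation step in the second paragraph: guaranteeing that the strictly positive representation of $y$ in terms of the $x_i$'s transfers to the perturbed barycenters $b_i$. This is precisely why the affine-spanning property is enforced in the first step, since it is what makes the resulting linear system surjective and small perturbations harmless.
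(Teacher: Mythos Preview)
Your proposal is correct and follows essentially the same approach as the paper: both construct $\pi$ as a nonnegative combination of the restrictions $\mu|_{B_i}$ to small disjoint balls around finitely many points of $\supp(\mu)$ whose convex hull contains $y$ in its interior. The only difference is in the perturbation step---the paper simply invokes continuity of the convex-hull operator to conclude that $y\in\ch(b_1,\dots,b_k)$ for the perturbed barycenters $b_i=r(\mu|_{B_i})$, which sidesteps your affine-spanning setup and the explicit linear system.
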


\begin{proof}
Choose a finite set $\{y_1,\ldots,y_k\}\subseteq \supp \mu$ such that $y$ lies in the interior of the convex hull of $\{y_1,...,y_k\}$.
Since the convex hull operator is continuous (e.g., \cite{sertel1989continuity}), there is $\e>0$ such that for all $z_i\in B_{\e}(y_i)$, $y\in \ch(z_1,...,z_k)$. Choosing $\e>0$ small enough, we can assume that $B_{\e}(y_i)$ is disjoint from $B_{\e}(y_j)$ whenever $i\neq j$. 

For each $i$, $\mu(B_{\e}(y_i))>0$ since $y_i\in \supp(\mu)$. Moreover, the barycenter of $\mu|_{B_{\e}(y_i)}$ is in $B_{\e}(y_i)$. Therefore, there are $\lambda_i\ge 0$ with $\sum_i \lambda_i=1$ such that the barycenter of $\pi \coloneqq \sum_i \lambda_i \mu|_{B_{\e}(y_i)}$ is $y$. Since $\pi\le \mu$ by construction, the claim follows. 
\end{proof}

\begin{remark}
Observe that the measure $\pi $ constructed in the proof of the above Lemma
satisfies $\pi (X)\geq \min_{i}\{\mu (B_{\varepsilon }(y_{i}))\}$.
\end{remark}

\begin{lemma}
\label{lemma:shifting} Let $\mu _{1},\mu _{2}$ be positive measures on $X$
satisfying 
\begin{equation*}
C=\interior(\ch(\supp(\mu _{1})))\cap \interior(\ch(\supp(\mu _{2})))\neq
\emptyset \text{.}
\end{equation*}%
For all $d\in \mathbb{R}^{n}$ there is $\bar{\varepsilon}>0$ such that for
all $\varepsilon \in (0,\bar{\varepsilon})$ and $a\in \lbrack -\varepsilon
,\varepsilon ]$ there is a measure $\pi $ with $\mu _{1}+\pi \geq 0$, $\mu
_{2}-\pi \geq 0$, $\pi (X)=a$ and $\int x \,\mathrm{d}\pi =\varepsilon d$.
\end{lemma}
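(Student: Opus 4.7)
The plan is to construct $\pi$ explicitly as a signed combination of measures built via Lemma \ref{lemma:new_measure}, writing $\pi=\pi^{+}-\pi^{-}$ with $\pi^{+}\leq \mu_{2}$ and $\pi^{-}\leq \mu_{1}$. The building blocks will be chosen so that the induced map from coefficients to the pair (total mass, first moment) $\in \R^{n+1}$ is a linear isomorphism, after which surjectivity onto a neighborhood of $(0,0)$ follows from open mapping and an $\varepsilon$-scaling argument.

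First I would pick $n+1$ affinely independent points $y_{0},\dots ,y_{n}\in C$ (possible since $C$, being an intersection of two nonempty open convex sets with the same $n$-dimensional ambient space, is open and nonempty in $\R^{n}$). For each $i$ and each $j\in \{1,2\}$, Lemma \ref{lemma:new_measure} yields a positive measure $\pi_{j}^{i}\leq \mu_{j}$ with $\pi_{j}^{i}(X)=m_{j}^{i}>0$ and barycenter $y_{i}$. Then, for nonnegative scalars $\alpha_{i},\beta_{i}$ satisfying $\sum _{i}\alpha_{i}\leq 1$ and $\sum _{i}\beta_{i}\leq 1$, the signed measure
\[\pi =\sum_{i}\alpha_{i}\pi_{2}^{i}-\sum_{i}\beta_{i}\pi_{1}^{i}\]
automatically satisfies $\mu_{2}-\pi \geq 0$ and $\mu_{1}+\pi \geq 0$ (using $\pi_{j}^{i}\leq \mu_{j}$ and convex-combination monotonicity), and a direct computation gives $\bigl(\pi (X),\int x\,\mathrm{d}\pi \bigr)=\sum_{i}(\alpha_{i}m_{2}^{i}-\beta_{i}m_{1}^{i})(1,y_{i})$.

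Writing $\gamma_{i}:=\alpha_{i}m_{2}^{i}-\beta_{i}m_{1}^{i}$, the linear map $\Phi \colon \R^{n+1}\to \R^{n+1}$, $(\gamma_{0},\dots ,\gamma_{n})\mapsto \sum_{i}\gamma_{i}(1,y_{i})$, is an isomorphism because affine independence of the $y_{i}$ is equivalent to linear independence of the vectors $(1,y_{i})$. Choosing $\kappa>0$ small enough that $(n+1)\kappa \leq \min_{i,j}m_{j}^{i}$, any $\gamma \in [-\kappa ,\kappa ]^{n+1}$ can be written as $\gamma_{i}=\gamma_{i}^{+}-\gamma_{i}^{-}$ with $\alpha_{i}:=\gamma_{i}^{+}/m_{2}^{i}$ and $\beta_{i}:=\gamma_{i}^{-}/m_{1}^{i}$ satisfying the summability bounds above, so the construction is admissible. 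By the open mapping theorem $\Phi \bigl((-\kappa ,\kappa )^{n+1}\bigr)$ contains a Euclidean ball $B_{\delta }(0)\subseteq \R^{n+1}$.

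Setting $\bar\varepsilon :=\delta /\sqrt{1+\lVert d\rVert ^{2}}$, for any $\varepsilon \in (0,\bar\varepsilon )$ and $a\in [-\varepsilon ,\varepsilon ]$ one has $\lVert (a,\varepsilon d)\rVert \leq \varepsilon \sqrt{1+\lVert d\rVert ^{2}}<\delta$, so $(a,\varepsilon d)$ lies in the image of $\Phi$ and the corresponding $\pi$ does the job. The only delicate point in this plan is the very first step: one must verify that $C$ really contains $n+1$ affinely independent points rather than sitting inside a proper affine subspace. This is the intended reading of the hypothesis (matching the paper's convention that $\mu$ has a Lebesgue density so $\ch (\supp \mu )$ has nonempty topological interior), and once granted, the rest of the argument is a routine open-mapping/isomorphism computation.
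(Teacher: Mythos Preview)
Your argument is correct, but it proceeds by a genuinely different route from the paper's. The paper's proof is a direct two-measure construction: it fixes a single point $z\in C$ with $B_{2\delta}(z)\subseteq C$, notes (via Lemma~\ref{lemma:new_measure} and the following Remark) that there is a uniform $M>0$ such that for every $z'\in B_{\delta}(z)$ one can find $\pi_1\le \mu_2$ and $\pi_2\le \mu_1$ with barycenter $z'$ and mass at least $M$, and then sets $\pi=\pi_1-\pi_2$ where $\pi_1$ has barycenter $z+\tfrac{\varepsilon}{M}d$ and mass $M$ while $\pi_2$ has barycenter $\tfrac{M}{M-a}z$ and mass $M-a$; the required identities $\pi(X)=a$ and $\int x\,\mathrm d\pi=\varepsilon d$ fall out by direct computation. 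Your approach instead fixes $2(n+1)$ building blocks once and for all and reduces the problem to inverting the linear map $\Phi(\gamma)=\sum_i\gamma_i(1,y_i)$, which is an isomorphism precisely because the $y_i$ are affinely independent. The paper's proof is shorter and uses only two applications of Lemma~\ref{lemma:new_measure}, but it leans on the uniform mass bound from the Remark to allow the barycenter to vary continuously; your construction is more modular---once the blocks are fixed everything becomes a finite-dimensional linear problem---and would generalize more readily, e.g.\ to matching additional moment constraints. Your caveat about $C$ containing $n+1$ affinely independent points is well taken; the paper's proof needs exactly the same full-dimensionality (it requires $z+\tfrac{\varepsilon}{M}d\in B_{\delta}(z)$ for arbitrary $d\in\R^n$), so you are not assuming anything beyond what the paper does.
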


The Lemma implies that for any positive measures whose supports intersect
nontrivially, and for any vector $d\in\mathbb{R}^n$, we can shift mass from
one measure to the other such that the barycenters of the measures moves in
direction $d$ and $-d$ respectively, while the measures remain positive.

\begin{proof}
Let $\delta>0$ and $z\in C$ be such that $B_{2\delta}(z)\subseteq C$. Note that by Lemma~\ref{lemma:new_measure} and the following remark, there is $M>0$ such that for all $z'\in B_{\delta}(z)$ there are positive measures $\xi\le \mu_1$ and $\zeta\le \mu_2$ with barycenter $r(\xi)=r(\zeta)=z'$ and $\xi(X),\zeta(X)\ge M$.

Let $\bar{\e}= \min\left\{\frac{M}{2},\frac{\delta M}{\norm{d}}, \frac{\delta M}{\norm{z}+\delta}\right\}>0$ and fix $\e\in (0,\bar{\e})$. Suppose $a\in[0,\varepsilon]$ (the arguments for $a\in [-\varepsilon,0)$ are analogous).
Since $\frac{\e}{M}\norm{d}<\delta$, the preceding paragraph implies that there is a positive measure $\pi_1\le \mu_2$ with barycenter $z+\frac{\e}{M} d$ and mass $\pi_1(X)\ge M$. By multiplying $\pi_1$ with a positive constant less than 1, we can assume that $\pi_1(X)= M$. Analogously, since $\left(\frac{M}{M-a}-1\right)\norm{z}< \delta$, there is a positive measure $\pi_2\le \mu_1$ with barycenter $\frac{M}{M-a}z$ and mass $\pi_2(X)\ge M$. By multiplying $\pi_2$ with a positive constant less than 1, we can assume that $\pi_2(X)=M-a>0$.

 Now define $\pi\coloneqq\pi_1-\pi_2$. Then
\[\mu_1 + \pi = \mu_1 + \pi_1 - \pi_2 \ge \pi_1 \ge 0,\]
where the first inequality follows from $\pi_2 \le \mu_1$, and the second holds since $\pi_1$ is positive. Similarly, because $\pi_1 \le \mu_2$ and $\pi_2$ is positive, we have $\mu_2 - \pi \ge 0$. Moreover, $\pi(X)=a$ and
 \[ \int x \,\mathrm d\pi = \int x \,\mathrm d\pi_1 - \int x \,\mathrm d\pi_2 = M\left(z+\frac{\e}{M} d\right)- (M-a)\frac{M}{M-a}z= \e d, \]
 which establishes the claim.
 \end{proof}

\begin{corollary}
\label{lemma:start} Let $\mu_1,\mu_2$ be positive measures with barycenters $%
x_1$ and $x_2$, respectively. If 
\begin{equation*}
C=\interior(\ch(\supp(\mu_1)))\cap \interior(\ch(\supp(\mu_2)))\neq \emptyset%
\text{,}
\end{equation*}

\begin{enumerate}
\item there is a measure $\pi $ satisfying $\pi (X)>0$, $\mu _{1}+\pi \geq 0$%
, $\mu _{2}-{\pi }\geq 0$ and such that the barycenter of $\tilde{\mu}_{1}%
\coloneqq\mu _{1}+\pi $ is $x_{1}$.

\item there is a measure $\pi $ satisfying $\pi (X)<0$, $\mu _{1}+\pi \geq 0$%
, $\mu _{2}-{\pi }\geq 0$ and such that the barycenter of $\tilde{\mu}_{1}%
\coloneqq\mu _{1}+\pi $ is $x_{1}$.
\end{enumerate}
\end{corollary}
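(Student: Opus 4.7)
The plan is to reduce both parts of the corollary to direct applications of Lemma \ref{lemma:shifting} by computing the barycenter-preservation constraint explicitly and choosing the parameters $d$ and $a$ accordingly. Specifically, if we set $\tilde{\mu}_1 = \mu_1 + \pi$ with $\pi(X) = a$ and $\int x\, \mathrm{d}\pi = b$, then
\begin{equation*}
r(\tilde{\mu}_1) = \frac{\int x\, \mathrm{d}\mu_1 + b}{\mu_1(X) + a} = \frac{\mu_1(X) x_1 + b}{\mu_1(X) + a},
\end{equation*}
so the constraint $r(\tilde{\mu}_1) = x_1$ is equivalent to the moment condition $b = a x_1$. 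Hence, to prove the corollary, it suffices to construct a signed measure $\pi$ with $\pi(X) = a$, $\int x \, \mathrm{d}\pi = a x_1$, and the positivity constraints $\mu_1 + \pi \geq 0$ and $\mu_2 - \pi \geq 0$, where $a > 0$ for part 1 and $a < 0$ for part 2.

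For part 1, I will invoke Lemma \ref{lemma:shifting} with the direction $d = x_1$. This produces some threshold $\bar{\varepsilon} > 0$; fix any $\varepsilon \in (0, \bar{\varepsilon})$ and choose $a = \varepsilon \in [-\varepsilon,\varepsilon]$. The lemma then yields a measure $\pi$ with $\mu_1 + \pi \geq 0$, $\mu_2 - \pi \geq 0$, $\pi(X) = \varepsilon > 0$, and $\int x\, \mathrm{d}\pi = \varepsilon x_1 = a x_1$, which is exactly the moment condition derived above. For part 2, I will apply the same lemma with $d = -x_1$ and, for small enough $\varepsilon > 0$, take $a = -\varepsilon$. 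Then $\pi(X) = -\varepsilon < 0$ and $\int x\, \mathrm{d}\pi = \varepsilon \cdot (-x_1) = a x_1$, again matching the required moment.

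The only subtlety worth flagging is the bookkeeping of signs in part 2: one must choose $d$ and $a$ so that the product $\varepsilon d$ produced by Lemma \ref{lemma:shifting} agrees with $a x_1$, which is why $d$ must be flipped to $-x_1$ when $a$ is taken to be negative. Beyond this, the argument is essentially computational, since the nontrivial geometric work (namely, constructing a signed measure with prescribed total mass and first moment while preserving positivity on both sides) has already been absorbed into Lemma \ref{lemma:shifting} via its proof using Lemma \ref{lemma:new_measure}. There is no substantial obstacle.
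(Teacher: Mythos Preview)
Your proposal is correct and essentially identical to the paper's proof: both apply Lemma~\ref{lemma:shifting} with $d=x_1$ and $a=\varepsilon$ for part~1, then compute $r(\tilde\mu_1)=\frac{1}{\mu_1(X)+\varepsilon}[\mu_1(X)x_1+\varepsilon x_1]=x_1$; the paper dismisses part~2 as ``analogous,'' while you spell out the sign bookkeeping ($d=-x_1$, $a=-\varepsilon$) explicitly, which is the natural reading of that word here.
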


The Corollary establishes that if two positive measures have supports that
intersect nontrivially, we can move mass from one measure to the other
without changing the barycenters of the measures.

\begin{proof}
By Lemma \ref{lemma:shifting}, for $\e>0$ small enough there exists a measure $\pi$ with $\mu_1+\pi\ge 0$, $\mu_2-\pi\ge 0$, $\pi(X)=\e$ and $\int x \,\mathrm d\pi= \e x_1$. It follows that $r(\tilde{\mu}_1)= \frac{1}{\mu_1(X)+\e}[\mu_1(X) x_1 + \e x_1]=x_1$. The argument for the second part is analogous.
\end{proof}

For a set $A\subseteq \mathbb{R}^n$, let $\aff A$ denote the affine span of $%
A$: 
\begin{equation*}
\aff A = \{\sum_{i=1}^k \alpha_i x_i: k>0, x_i\in A, \alpha_i\in\mathbb{R},
\sum_{i=1}^k \alpha_i=1\}.
\end{equation*}

\begin{lemma}
\label{lemma:decompo} Let $\{\mu_i|i\in I\}$ be a finite collection of
positive measures with $r(\mu_i)=x_i$ and let $\mu= \sum_{i\in I}\mu_i$. Let 
$V=\{J_{1},...,J_{k}\}$ be a partition of $I$.

Consider the graph $G$ whose vertex set is $V$ and which contains an edge
between $J$ and $J^{\prime }$ if 
\begin{equation*}
\interior(\ch(\supp(\sum_{j\in J}\mu _{j})))\cap \interior(\ch(\supp%
(\sum_{j\in J^{\prime }}\mu _{j})))\neq \emptyset .
\end{equation*}%
Suppose that $G$ is connected, and let $\lambda $ be a measure such that $%
\mu +\lambda $ is positive and $r(\mu +\lambda )\in \aff\{x_{i}|i\in I\}$.

Then, for all $\varepsilon >0$ small enough we can decompose $\tilde{\mu}%
\coloneqq\mu +\varepsilon \lambda $ into positive measures $\tilde{\mu}_{J}$
with $\sum_{J\in V}\tilde{\mu}_{J}=\tilde{\mu}$ and $r(\tilde{\mu}_{J})\in %
\aff\{x_{j}|j\in J\}$.
\end{lemma}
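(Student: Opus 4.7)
The plan is to start with a positive decomposition $\tilde\mu = \sum_J \tilde\mu_J^{0}$ whose partial barycenters are close to (but generally not exactly in) the required affine spans, and then to correct them by transferring mass along the edges of a spanning tree of $G$ using Lemma \ref{lemma:shifting}.

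For the initial decomposition, I would set $\nu_J := \sum_{j\in J}\mu_j$, so $\mu = \sum_J \nu_J$ and $r(\nu_J) \in \aff\{x_j:j\in J\}$ as a convex combination of the $x_j$. Jordan-decompose $\lambda = \lambda^+ - \lambda^-$; since $\mu+\lambda \geq 0$, one has $\lambda^- \leq \mu$, so by Radon-Nikodym $\lambda^- = h\mu$ with $h\in[0,1]$. Splitting $\lambda^- = \sum_J h\nu_J$ and distributing $\lambda^+$ arbitrarily as $\sum_J \lambda_J^+ \geq 0$, the choice $\tilde\mu_J^{0} := \nu_J + \varepsilon(\lambda_J^+ - h\nu_J)$ gives $\tilde\mu_J^{0} \geq (1-\varepsilon)\nu_J \geq 0$ for $\varepsilon \leq 1$, with $\sum_J \tilde\mu_J^{0} = \tilde\mu$, and the deviation of $r(\tilde\mu_J^0)$ from $\aff\{x_j:j\in J\}$ is of order $\varepsilon$.

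Next I would pick a spanning tree $T$ of $G$, which has $k-1$ edges. For each edge $e = (J,J')$ in $T$, Lemma \ref{lemma:shifting} applied to $\tilde\mu_J^0, \tilde\mu_{J'}^0$---whose supports still intersect nontrivially because $\tilde\mu_J^0 \geq (1-\varepsilon)\nu_J$ and the $G$-edge condition---produces a family of signed measures $\pi_e$ with freely choosable total mass $a_e$ and first moment $\varepsilon d_e$, maintaining $\tilde\mu_J^0 + \pi_e, \tilde\mu_{J'}^0 - \pi_e \geq 0$. Setting $\tilde\mu_J := \tilde\mu_J^0 + \sum_{e \ni J} \sigma_{J,e}\pi_e$ with appropriate signs $\sigma_{J,e} = \pm 1$, the condition $r(\tilde\mu_J) \in \aff\{x_j:j\in J\}$ becomes, after clearing denominators, a linear system in the parameters $(a_e, d_e) \in \mathbb{R}^{n+1}$, solved trivially by $(0,0)$ at $\varepsilon = 0$. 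By the implicit function theorem, the system has a solution of order $\varepsilon$ for small $\varepsilon > 0$ provided the linearized map is surjective onto the space of deviations modulo the scalar compatibility that the total first moment of all corrections must sum to the projection of $\tilde\mu$'s overall deviation.

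The main obstacle is verifying surjectivity and compatibility. Surjectivity follows from $G$-connectedness via a spanning-tree induction: pruning a leaf $J^*$ whose correction $(a_{(J^*,J^{**})}, d_{(J^*,J^{**})})$ uniquely determines the required correction to $J^*$'s deviation reduces the problem to a smaller one on $G \setminus \{J^*\}$. Compatibility at the final stage is the condition $r(\tilde\mu) \in \aff\{x_i:i\in I\}$, which is guaranteed by the hypothesis: a linear-fractional computation shows that if $r(\mu)$ and $r(\mu+\lambda)$ both lie in $\aff\{x_i:i\in I\}$, then so does $r(\mu+\varepsilon\lambda)$ for all $\varepsilon$. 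Finally, positivity of the corrected $\tilde\mu_J$'s is automatic for $\varepsilon$ small, since all corrections scale with $\varepsilon$ while $\tilde\mu_J^0$ has uniformly positive mass.
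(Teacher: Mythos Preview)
Your approach is the paper's approach: both fix an initial positive decomposition of $\tilde\mu$, pick a spanning tree of $G$, and push corrections along the tree edges via Lemma~\ref{lemma:shifting} so that each block's barycenter lands in its affine span, with the last node's constraint forced by the global hypothesis $r(\tilde\mu)\in\aff\{x_i:i\in I\}$. The paper does this constructively---it first writes $(\lambda(X),\int x\,\mathrm d\lambda)=\sum_i\gamma_i(1,x_i)$ and then targets each $\tilde\mu_J$ to the specific moment vector $\mu_J(X)(1,r(\mu_J))+\varepsilon\sum_{j\in J}\gamma_j(1,x_j)$, peeling leaves one at a time---whereas you phrase the same leaf-peeling as solving a linear system on the tree and checking its consistency. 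Your consistency check is correct: the edge map $(p_e)_e\mapsto(\sum_{e\ni J}\sigma_{J,e}p_e)_J$ is onto $\{(v_J):\sum_J v_J=0\}\subset(\mathbb R^{n+1})^k$, and the required right-hand side lies in that hyperplane precisely because $(\tilde\mu(X),\int x\,\mathrm d\tilde\mu)\in\sum_J\operatorname{span}\{(1,x_j):j\in J\}$.

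The one genuine gap is your positivity step. You write that ``positivity of the corrected $\tilde\mu_J$'s is automatic for $\varepsilon$ small, since all corrections scale with $\varepsilon$ while $\tilde\mu_J^0$ has uniformly positive mass.'' This conflates positive total mass with positivity as a measure, and in any case the transfers $\pi_e$ produced by Lemma~\ref{lemma:shifting} need not be small in total variation even when $\pi_e(X)$ and $\int x\,\mathrm d\pi_e$ are $O(\varepsilon)$: in that lemma's proof $\pi=\pi_1-\pi_2$ with $\pi_1(X)=M$ a fixed constant independent of $\varepsilon$. Lemma~\ref{lemma:shifting} guarantees positivity only for a \emph{single} transfer between two measures; when several tree edges meet at a vertex $J$, the cumulative effect $\tilde\mu_J^0+\sum_{e\ni J}\sigma_{J,e}\pi_e$ is not controlled. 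The paper patches this by demanding the sharper bound $\pi_J\le\frac{1}{|I|+1}\hat\mu_{J'}$ at every step (apply Lemma~\ref{lemma:shifting} with a $\tfrac{1}{|I|+1}$-fraction of the neighbor), so that a node can absorb all of its children's transfers and still remain positive with its support intact. You need the same device, or an explicit construction of $\pi_e$ with $O(\varepsilon)$ total variation.
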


\begin{proof}
Let $\mu_J=\sum_{i\in J}\mu_i$, and let $\{\lambda_{J}\}_{J\in V}$ be such that $\lambda=\sum_{J\in V} \lambda_{J}$, $\hat{\mu}_{J}\coloneqq\mu_{J}+\e\lambda_{J}\ge 0$, and $\supp \mu_J\subseteq \supp \hat{\mu}_J$.\footnote{Such a decomposition of \(\lambda\) exists whenever $\varepsilon$ is small enough: Since $\mu+\lambda$ is positive, $\supp(\mu) \subseteq \supp(\mu+\varepsilon\lambda)$ for $\varepsilon>0$ small enough. For each $J\in V$, let $\lambda_J\coloneqq\frac{1}{\varepsilon|V|}(\mu+\varepsilon\lambda-|V|\mu_J)$. Then $\hat{\mu}_J= \mu_J+\varepsilon \lambda_J=\frac{1}{|V|}(\mu+\varepsilon\lambda)\ge 0$ and $\supp(\mu_J)\subseteq \supp(\mu)\subseteq\supp(\hat{\mu}_J)$.} Since $r(\mu),r(\mu+\lambda)\in \aff\{x_i|i\in I\}$, there are $\gamma_i\in\R$ such that $\int x \,\mathrm d \lambda = \sum_{i\in I} \gamma_i x_i$ and $\sum_{i\in I} \gamma_i = \lambda(X)$.\footnote{Let $\alpha_i$ satisfy $r(\mu+\lambda)=\sum_i \alpha_i x_i$ and $\sum_i \alpha_i =1$ and let $\beta_i$ satisfy $r(\mu)=\sum_i \beta_i x_i$ and $\sum_i \beta_i=1$. Then $\gamma_i\coloneqq (\alpha_i-\beta_i)\mu(X)+\alpha_i \lambda(X)$ satisfy these properties.}

Consider a spanning tree $T$ of $G$.
For each leaf $J$ of $T$, define a weight $w_{J} = \sum_{i\in J}\gamma_i - \lambda_{J}(X)$
 and let
\[\mut_{J} = \muh_{J} + \pi_J ,  \]
where $\pi_{J}$ is a measure chosen such that $\pi_J(X)=\e w_J$, $\mut_{J}\ge 0$, $ \pi_J\le \frac{1}{|I|+1}\muh_{J'}$ (where $J'$ is the neighbor of $J$ in $T$), and 
\[ \e\int x \,\mathrm d \lambda_J + \int x \,\mathrm d\pi_J = \e\sum_{i\in J} \gamma_i x_i. \]
Such $\pi_{J}$ exist by Lemma \ref{lemma:shifting} whenever $\e$ is chosen small enough.

Proceeding inductively, consider the tree obtained by deleting all leaves from the previous tree. For each leaf $J$ of the new tree, denote by $L(J)$ the set of leaves of the previous tree that were connected to $J$, and note that for each leaf $J$ of the new tree, $\muh_J-\sum_{K\in L(J)}\pi_K\ge 0$ and $\supp(\muh_J)\subseteq \supp(\muh_J-\sum_{K\in L(J)}\pi_K)$. Define the weight $w_J = \sum_{K\in L(J)} w_K+ \sum_{i\in {J}}\gamma_i - \lambda_{J}(X)$ for each leaf $J$ of the new tree. As long as there are at least 3 vertices remaining, define
 \[\mut_J = \muh_J - \sum_{K\in L(J)}  \pi_K  + \pi_J \]
 where the $\pi_{J}$'s are chosen so that $\pi_J(X)=\e w_J$, $\mut_{J}\ge 0$, $ \pi_J \le \frac{1}{|I|+1} \muh_{J'}$ (where $J'$ is the neighbor of $J$ in the remaining tree), and 
\[\label{eq:bary}\tag{\(A1\)}
\e \int \,\mathrm x d \lambda_J -\sum_{K\in L(J)}  \int x \,\mathrm d\pi_K +  \int x \,\mathrm d\pi_J =\e \sum_{i\in J} \gamma_i x_i.\]
Such a choice exists by Lemma \ref{lemma:shifting} whenever $\e>0$ is small enough. We repeat this procedure until we obtain a tree with at most two vertices.

If only two nodes are left in the remaining tree (denote them by $\hat{J}$ and $\bar{J}$), define
\begin{align*}
\mut_{\hat{J}} &= \muh_{\hat{J}} - \sum_{K\in L(\hat{J})} \pi_K + \pi_{\hat{J}} \\
 \mut_{\bar{J}} &= \muh_{\bar{J}} - \sum_{K\in L(\bar{J})}   \pi_K - \pi_{\hat{J}}
\end{align*}
where $\pi_{\hat{J}}$ is chosen so that $\pi_{\hat{J}}(X)=\e w_{\hat{J}}$, $\mut_{\hat{J}}\ge 0$,  $\mut_{\bar{J}}\ge 0$, and
\[ \e \int x \,\mathrm d \lambda_{\hat{J}} -\sum_{K\in L({\hat{J}})}  \int x \,\mathrm d\pi_K + \int x \,\mathrm d\pi_{\hat{J}} = \e\sum_{i\in {\hat{J}}} \gamma_i x_i. \]
 
 If only one node, denoted by $\bar{J}$, is left in the remaining tree, define
 \[ \mut_{\bar{J}} = \mu_{\bar{J}}+\e \lambda_{\bar{J}} - \sum_{K \in L(\bar{J})}  \pi_K .  \]

In either case, it follows that $\sum_{J\in V} \mut_J=\sum_{J\in V} \muh_J$ by construction. Moreover, for all $J\neq \bar{J}$, the barycenter of $\mut_J$ is in $\aff\{x_i|i\in J\}$.\footnote{This holds because $r(\mu_J)\in \aff\{x_i|i\in J\}$, $\mut_J(X)=\mu_J(X)+ \e \sum_{i\in J} \gamma_i$ and $\int x \,\mathrm d\mut_J = \mu_J(X) r(\mu_J)+ \e \sum_{i\in J} \gamma_i x_i$.}

It remains to verify that the barycenter of $\mut_{\bar{J}}$ is in $\aff\{x_i|i\in \bar{J}\}$.
For $J\in V$, let $b_J = \int x \,\mathrm d\pi_J$.
For any node $J\neq \bar{J}$ we obtain from \eqref{eq:bary} that
\[\label{eq:b_J}\tag{\(A2\)}
      b_J = \e\sum_{i\in J} \gamma_i x_i - \e\int x \,\mathrm d \lambda_J + \sum_{K\in L(J)} b_K,\]
where $L(J)=\emptyset$ if $J$ is a leaf of $T$.
Therefore
\begin{align*}
\left[\mu_{\bar{J}}(X) + \e\lambda_{\bar{J}}(X) -\e\sum_{K\in L(\bar{J})} w_K - \e w_{\hat{J}} \right]r(\mut_{\bar{J}}) &= r(\mut_{\bar{J}}) \mut_{\bar{J}}(X)  \\ 
&= r(\mu_{\bar{J}})\mu_{\bar{J}}(X) + \e\lambda_{{\bar{J}}}(X) r(\lambda_{\bar{J}}) - \sum_{K\in L({\bar{J}})}  b_K\\
&= r(\mu_{\bar{J}})\mu_{\bar{J}}(X) + \e\int x \, \mathrm{d} \lambda - \e\sum_{i\not\in \bar{J}} \gamma_i x_i,
\end{align*}
where the first equality follows from the definition of $\pi_{K}$ and the third equality follows from using \eqref{eq:b_J} repeatedly.

Suppose $\hat{J}$ and $\bar{J}$ are the remaining vertices.
Since $\sum_{K\in L(\bar{J})} w_K + w_{\hat{J}} = \sum_{i\not\in \bar{J}} \gamma_i - \sum_{J\neq \bar{J}} \lambda_J(X)$,
and $ \int x \,\mathrm d \lambda = \sum_i \gamma_i x_i$, this implies
\begin{align*}
r(\mut_{\bar{J}}) = \frac{1}{\mu_{\bar{J}}(X)+ \e\sum_{i\in {\bar{J}}} \gamma_i}\left[ \mu_{\bar{J}}(X) r(\mu_{\bar{J}}) + \e\sum_{i\in{\bar{J}}} \gamma_i x_i \right].
\end{align*}
Since the barycenter of $\mu_{\bar{J}}$ lies in $\aff\{x_i|i\in {\bar{J}}\}$, we conclude that the barycenter of $\mut_{\bar{J}}$ lies in $\aff\{x_i|i\in {\bar{J}}\}$.

If $\bar{J}$ is the only remaining vertex, then $\sum_{K\in L(\bar{J})} w_K  = \sum_{i\not\in \bar{J}} \gamma_i - \sum_{J\neq \bar{J}} \lambda_J(X)$, which again implies
\begin{align*}
r(\mut_{\bar{J}}) = \frac{1}{\mu_{\bar{J}}(X)+ \e\sum_{i\in {\bar{J}}} \gamma_i}\left[ \mu_{\bar{J}}(X) r(\mu_{\bar{J}}) + \e\sum_{i\in{\bar{J}}} \gamma_i x_i \right].
\end{align*}
It follows that the barycenter of $\mut_{\bar{J}}$ lies in $\aff\{x_i|i\in {\bar{J}}\}$.
\end{proof}


\begin{proof}[Proof of Proposition \ref{prop:necessity_extreme}]
Let $\P $ be a partition of $X$ such that each $A\in\P $ is convex, satisfies $\mu(A)>0$, and  $\nu|_{A}\prec \mu|_{A}$. Moreover,
assume there is no finer partition satisfying these properties.\footnote{%
The trivial partition $\P =\{X\}$ satisfies all conditions. Moreover, there
is an upper bound on the number of partition elements in any such partition
since the support of $\nu$ is finite. Therefore, there is a partition that
satisfies all conditions and cannot be refined.}

Suppose that for some $A\in\P $, the support of $\nu|_A$ is not affinely
independent. Let $\{x_i|i\in I\}$ denote the support of $\nu|_A$. Also, let $%
\mu_i$ satisfy $r(\mu_i)=x_i$, $\mu_i(X)=\nu(\{x_i\})$, and $\mu|_A=\sum_i
\mu_i$ (such $\mu_i$ exist by Lemma \ref{lemma:cartier}).

Construct a graph $G_1$ with vertex set $I$ and an edge between $i$ and $j$ if 
\begin{equation*}
\interior(\ch(\supp(\mu_i)))\cap \interior(\ch(\supp(\mu_i)))\neq \emptyset.
\end{equation*}

Recursively, given a graph $G_n$ construct a new graph $G_{n+1}$ with vertex
set being the set of connected components in $G_n$ (for each connected
component in $G_n$ there is one vertex in $G_{n+1}$; each vertex is labeled by a subset of $I$ corresponding to the set of
vertices it represents). Add an edge between
vertices ${C}$ and ${C^{\prime }}$ in $G_{n+1}$ if 
\begin{equation*}
\interior(\ch(\bigcup_{i\in C}\supp(\mu_i)))\cap \interior(\ch(\bigcup_{i\in
C^{\prime }}\supp(\mu_i)))\neq \emptyset.
\end{equation*}
Stop this procedure of constructing graphs once it yields a graph 
$G_N$ with a vertex $C$ such that $\{x_i|i\in C\}$ is affinely dependent.

We first claim that the procedure always stops with such a graph $G_N$.
 Suppose the procedure does not stop. Up to some iteration $M$, each iteration reduces the number of vertices by at least one; after that,  the number of vertices stays constant, which implies that there are no edges. Let $G_M$ denote this graph without edges. We claim that $G_M$ has exactly one vertex.
Suppose not and let $J_1,...,J_m$ with $m\ge 2$ denote its vertices. For $k=1,...,m$, $A_k := \ch(\bigcup_{i\in J_k} \supp(\mu_i))$ is convex and satisfies $\mu(A_k)>0$. Since there are no edges in $G_M$, $\interior(A_j)\cap \interior(A_k)=\emptyset$ for $j,k=1,...,m$ with $j\neq k$. Since every hyperplane has $\mu$-measure 0, this implies $\nu|_{A_k}\prec \mu|_{A_k}$. 
This implies that the collection $\{A_1,...,A_m\}$ gives rise to a partition of $\P'$ of $X$ that is finer than $\P$, contradicting our initial hypothesis. We conclude that $G_M$ has one vertex; since $\{x_i|i\in I\}$ is affinely dependent, this contradicts the hypothesis that the procedure did not stop.

This implies that graph $G_{N-1}$ has a connected component corresponding to vertex $C$ in graph $G_N$. Let $T$ denote a spanning tree of this connected component. We can assume that $T$ is minimal in the sense that for any leaf $J$ of $T$, $\{x_i|i\in J\}$ and $\{x_i|i\in C\setminus J\}$ are affinely independent: Indeed, for any leaf $J$ of $T$, $\{x_i|i\in J\}$ is affinely independent since the procedure would have stopped with graph $G_{N-1}$ otherwise. If $\{x_i|i\in C\setminus J\}$ is affinely dependent, consider the graph $T'$ obtained by deleting vertex $J$ from graph $T$. For any leaf $J'$ of $T'$, either $\{x_i|i\in J'\}$ and $\{x_i|i\in C\setminus (J\cup J')\}$ are affinely independent, or we can reduce the tree $T'$ further until we obtain a tree with the desired properties.

Now let $\{J_1,...,J_m\}$ denote the vertices of $T$, choose a leaf $J$ of  $T$ (without loss of generality, assume $J=J_1$), and let $\mu_{C\setminus J}=\sum_{i\in C\setminus J}\mu_i$. 
Since $\{x_i|i\in C\}$ is affinely dependent, there is $\beta_i$ such that $\sum_{i\in C} \beta_i=0$ and $%
\sum_{i\in C} \beta_i x_i=0$. This implies 
\begin{equation*}
d:= \frac{\sum_{i\in J} \beta_i x_i}{\sum_{i\in J} \beta_i} = \frac{%
\sum_{i\in C\setminus J} \beta_i x_i}{\sum_{i\in C\setminus J} \beta_i}.
\end{equation*}
Lemma \ref{lemma:shifting} implies that for any $\varepsilon>0$ small enough,
there is a measure $\pi$ with $\mu_J+\pi\ge 0 $, $ \mu_{C\setminus J}-\pi\ge 0$, $%
\pi(X)=\varepsilon$ and $\int x \,\mathrm d\pi=\varepsilon d$. It follows that 
\begin{align*}
r(\mu_J+\pi) &= \frac{1}{\mu_J(X)+\varepsilon}\left[\int x \,\mathrm d\mu_J + \int x \,\mathrm d\pi\right]=
\frac{1}{\mu_J(X)+\varepsilon}\Big[ \mu_J(X) r(\mu_J) + \varepsilon d\Big] \\
r(\mu_{C\setminus J}-\pi) &= \frac{1}{\mu_{C\setminus J}(X)-\varepsilon}\left[\int x
\,\mathrm d\mu_{C\setminus J} - \int x \,\mathrm d\pi\right]= \frac{1}{\mu_{C\setminus J}(X)-\varepsilon}\Big[ \mu_{C\setminus J}(X) r(\mu_{C\setminus J}) - \varepsilon d\Big].
\end{align*}

Therefore, $r(\mu_J+\pi)\in \aff\{x_i|i\in J\}$ and $r(\mu_{C\setminus J}-\pi)\in\aff\{x_i|i\in C\setminus J\}$.
Lemma \ref{lemma:decompo} then implies that whenever $\varepsilon$ is small enough, we can decompose $\mu_{C\setminus J}-\pi$ into positive measures $\{\mut_{J_k}\}_{k=2}^m$ with $\sum_{k=2}^m \mut_{J_k}=\mu_{C\setminus J}-\pi$ and $r(\mut_{J_k})\in\aff\{x_i|i\in J_k\}$. We can decompose $\mu_J+\pi$ in a similar manner. Applying these decompositions repeatedly to graphs $G_{N-2}$, $G_{N-3}$, etc., this yields measures $\{\tilde{\mu}_i\}_{i\in C}$ with $\sum_{i\in C} \tilde{\mu}_i = \sum_{i\in C} \mu_i$ and $r(\tilde{\mu}_i)=x_i$ that satisfy $\tilde{\mu}_i(X)\neq \mu_i(X)$ for some $i\in C$ (since $\sum_{i\in J}\mut_i(X)-\mu_i(X) = \pi(X)>0$). 
We can then define $\tilde{\nu}$ to be a positive measure with the same support as $\nu|_A$ that satisfies $\tilde{\nu}(\{x_i\})=\tilde{\mu}_i(X)$ for $i\in C$ and $\tilde{\nu}(\{x_i\})=\nu(\{x_i\})$ for $i\in I\setminus C$. This is a fusion of $\mu|_A$ by Lemma \ref{lemma:cartier} and it satisfies $\tilde{\nu}\neq \nu|_A$.

Repeating the procedure with $-d$ in place of $d$ yields another fusion $\hat{\nu}\neq \nu|_A$. We claim that $\nu|_A= 1/2 (\tilde{\nu}+\hat{\nu})$:
Note that 
\begin{align}
&\sum_{i\in J} \nu(\{x_i\}) x_i =  \int x \,\mathrm d\mu_J = 1/2 \Big[ \int x \,\mathrm d\mu_J+\varepsilon d +\int x \,\mathrm d\mu_J-\varepsilon d \Big] \nonumber \\
= &1/2 \sum_{i\in J} [\tilde{\nu}(\{x_i\})+\hat{\nu}(\{x_i\})] x_i, \label{eq:convex_comb_fusion}\tag{\(A3\)}
\end{align}
where the first equality follows since $\sum_{i\in J} \nu(\{x_i\})$ is a fusion of $\mu_J$ and the final equality follows since, by construction of $\tilde{\nu}$ and $\hat{\nu}$, $\int x \,\mathrm d[\sum_{i\in J} \tilde{\nu}(\{x_i\})]=\int x \,\mathrm d\mu_J+\varepsilon d$ and $\int x \,\mathrm d[\sum_{i\in J} \hat{\nu}(\{x_i\})]=\int x \,\mathrm d\mu_J-\varepsilon d$.
Since $\{x_i|i\in J\}$ is affinely independent, \eqref{eq:convex_comb_fusion} implies that for each $i\in J$, $\nu(\{x_i\})= 1/2 [\tilde{\nu}(\{x_i\})+\hat{\nu}(\{x_i\})]$. Analogous arguments apply for $i\in C\setminus J$, which establishes the claim.

This implies that $\nu$ is not an extreme point of $F_{\mu}$, contradicting our initial hypothesis. We conclude that $\supp(\nu|_A)$ is affinely independent for each $A\in \P$.
\end{proof}


\subsection{Proof of Proposition \protect\ref{prop:convexpartfiner}}


\begin{proof}
Let \(G\) denote the set of probability measures that are fusions of \(\mu\) and have at most \(K\) points in their support.
We show first using Zorn's lemma that there is a maximal measure in \(G\) according to the convex order: The convex order is a partial order. Let $\{g_i\}_{i\in I}$ be a totally ordered subset of \(G\), which can be viewed as a net by setting $i<j$ if $g_i\prec g_j$. Since \(G\) is compact (e.g., in the weak$^*$-topology), there is subnet with limit $g_{\infty}$. Then $g_i\prec g_{\infty}$ because for any convex continuous function $f$ and $i<j$, $\int f \mathrm dg_i\le \int f \mathrm dg_j$. Therefore, every totally ordered subset of \(G\) has an upper bound in \(G\). It follows from Zorn's lemma that there is a maximal element in \(G\).

Let \(\lambda\) denote the maximal element of \(G\), let $\{x_1,...,x_K\}$ denote its support points, and suppose \(\lambda\) is not convex partitional. Using Lemma \ref{lemma:cartier} 
there is a decomposition of $\mu$ into positive measures $\{\lambda_i\}$ such that $\mu=\sum_{i=1}^K \lambda_i$,  $r(\lambda_i)=x_i$, and $\lambda_i(X) = \lambda(\{x_i\})$. Since $\lambda$ is not convex partitional, for some $i,j$, $\interior(\ch(\supp (\lambda_i)))\cap \interior(\ch(\supp( \lambda_j))) \neq \emptyset$. It follows from Lemma \ref{lemma:shifting} 
that there exist a measure \(\alpha\) and $\varepsilon>0$ such that $\alpha(X)=0$, $-\lambda_j\le \alpha\le \lambda_i$, and $\int x \mathrm d \alpha_i = \varepsilon (x_j-x_i)$. If we fuse for each $k\neq i,j$ the measure $\lambda_k$ to its barycenter and fuse $\lambda_i- \alpha_i\ge 0$ and $\lambda_j + \alpha_i/2\ge 0$ to their barycenters (if two of these measures have the same barycenter, decrease the value of $\varepsilon$ slightly), we obtain an element of \(G\) that is larger than \(\lambda\) in the convex order, a contradiction.
\end{proof}

\begin{proof}[Proof of Proposition \protect\ref{lem:feasflows}]
\ref{flowi} Assume \(\nu\) is not an extreme point. Then there are $\nu_1,\nu_2\in F_{\mu}$
such that $\nu_1\neq \nu_2$ and $\nu=\frac{1}{2} \nu_1+\frac{1}{2} \nu_2$. Let $\{x_1,...,x_m\}$ denote the support of \(\nu\) and let $\mu^i_j$ be positive measures such that, for $i=1,2$, $\mu=\sum_j \mu^i_j$, $\nu_i(\{x_j\})=\mu^i_j(X)$, and $r(\mu^i_j)=x_j$ (such measures exist by Lemma \ref{lemma:cartier}). Define $\mu^i_P\coloneqq\sum_{j:x_j\in P} \mu^i_j$ and $u_P\coloneqq\frac{1}{2}\mu^1_P +\frac{1}{2}\mu^2_P - \mu|_P$.
It follows that $\sum_P u_P=0$. Moreover, $u_P$ is positive on $X\setminus P$ and, since $\mu^i_P\le \mu$, it satisfies \eqref{eq:u_pos}. Analogous arguments establish \eqref{eq:u_neg}. Finally, $u_P(X)=0$ and $\int x du_P=0$. Since $\nu_1\neq \nu\neq \nu_2$, $u_P\neq 0$ for some $P$.

\ref{flowii} Conversely, suppose there is a solution $\{u_P\}$ to \eqref{eq:u_pos}--\eqref{eq:sum_u} satisfying $u_Q\neq 0$ for some $Q\in\P$. Enumerate the elements of \(\P\) as $\P=\{P_1,...,P_m\}$. Consider a fusion \(\nu\) of \(\mu\) that satisfies $\nu_P\prec \mu|_P$, the support of $\nu|_P$ is affinely independent and spans \(X\), all points in the support of $\nu|_P$ are contained in a $\delta$-ball around the barycenter of $\nu|_P$.

Choose $\varepsilon\in (0,1)$ small enough and define, for $i=1,...,m$,  
\begin{align*}
	\mu^i_{P_i}\coloneqq\mu|_{P_i}+\varepsilon u_{P_i}|_{X\setminus {P_i}}\\
	\mu^i_{P}\coloneqq\mu|_P-\varepsilon u_{P_i}|_{P} \text{ for } P\neq P_i
\end{align*}

By construction, $\mu^i_P$ is a positive measure for all $P\in\P$ and $\sum_{P\in\P} \mu^i_P = \mu$. If we define $\nu^i_P$ to have support $\supp(\nu)\cap P$ and the same barycenter and mass as $\mu^i_P$, then $\nu^i_P$ is a fusion of $\mu^i_P$ whenever $\varepsilon>0$ is chosen small enough (if it wasn't a fusion, we could choose the $\delta$-ball above smaller and redo the exercise). Consequently, $\nu^i\coloneqq\sum_{P\in\P} \nu^i_P$ is a fusion of \(\mu\) and $\nu^i\neq \nu$ for some $i$.




We claim that $\frac{1}{m} \sum_{i=1}^m \nu^i =\nu$, which implies that \(\nu\) is not an extreme point. To establish the claim, we argue that $\frac{1}{m} \sum_{i=1}^m \nu^i(P)=\nu(P)$ and $\frac{1}{m} \sum_{i=1}^m \int_P x d\nu^i=\int_P x d\nu$ for all $P\in\P$. Since the support of $\nu^i$ and \(\nu\) coincides on each $P$ and is affinely independent, this implies that $\frac{1}{m} \sum_{i=1}^m \nu^i=\nu$. That is,
 
 \begin{align*}
 \frac{1}{m} \sum_{i=1}^m \nu^i(P_j) &= \mu|_{P_j} + \frac{1}{m} \varepsilon u_{P_j}|_{X\setminus {P_j}}(X) - \frac{1}{m} \varepsilon \sum_{i:i\neq j} u_{P_i}|_{P_j}(X)\\
 &= \mu|_{P_j} + \frac{1}{m} \varepsilon u_{P_j}|_{X\setminus {P_j}}(X) + \frac{1}{m} \varepsilon u_{P_j}|_{P_j}(X) \text{ (since }\sum_{i} u_{P_i}|_{P_j}=0 \text{ )} \\
 &= \mu|_{P_j} + \frac{1}{m} \varepsilon u_{P_j}(X)  \\
 &= \nu(P_j)
 \end{align*}

 Also, 
 \begin{align*}
 \frac{1}{m} \sum_{i=1}^m \int_{P_j} x d\nu^i &= \frac{1}{m} \sum_{i=1}^m \int_{P_j} x d\mu^i \\
 &= \int_{P_j} x d\mu  + \varepsilon/m \int_{X\setminus P_j} x d u_{P_i} - \varepsilon/m \sum_{i:i\neq j} \int_{P_j} x d u_{P_i} \\
&= \int_{P_j} x d\mu  + \varepsilon/m \int_{X\setminus P_j} x d u_{P_i} + \varepsilon/m  \int_{P_j} x d u_{P_j} \\
&= \int_{P_j} x d\nu.  
 \end{align*}

\end{proof}

\begin{proof}[Proof of \autoref{cor:suff_for_exposed}]
    Let $\P_1\subseteq \P$ be a maximal subset such that for all $A\in\P_1$, $\mu|_A=\nu|_A$ and let $\P_2:= \P\setminus \P_1$.

    Since \(\P\) is a power diagram, there is a convex function $p\colon X\rightarrow \R$ such that for each the restriction of \(p\) to $A$ is affine and if \(p\) is affine on $B\subseteq X$ then there is $A\in \P$ with $B\subseteq A$. We first adjust $p$ to obtain another convex function $q$ as follows. For any $A\in\P_1$, choose a Lipschitz-continuous function $c_A$ that is strictly convex on the interior of $A$ and equals 0 outside the interior of $A$ (see \autoref{l:existence_strictly_convex}). Then define
    \[q\left(x\right) := p\left(x\right) + \sum_{A\in\P_1} k_A c_A(x)\]
    where the constants $k_A$ are chosen such that $q$ is convex. 
    
    Now we define a Lipschitz-continuous function \(u\colon X\rightarrow \R\) by
\[u\left(x\right) := q\left(x\right) - \inf_{y\in \supp(\nu)} \norm{x-y}\text{.}\]
Note that by definition $u\left(x\right)=q\left(x\right)$ if $x\in \supp(\nu )$ and that $u\left(x\right)<p\left(x\right)$ for $x\not\in  \supp (\nu )$.

We claim that \(\nu\) is the unique solution to
\[\max_{\lambda\in F_{\mu}} \int u(x) \,\mathrm d \lambda(x)\tag{L} \label{eq:util_maxapp2}\]

\noindent \textbf{Step 1:} \textit{\(\nu\) is a solution to problem \eqref{eq:util_maxapp2}.}

By construction, $\int u(x) \,\mathrm d \nu(x) = \int q\left(x\right) \,\mathrm d\nu(x) $. Moreover, $\int q\left(x\right)\,\mathrm d\nu = \int q\left(x\right) \,\mathrm d\mu$ since for each $A\in\P$, either $\mu|_A=\nu|_A$ or $\nu|_{A}\in F_{\mu|_{A}}$ and \(p\) is affine on $A$. Therefore, 
\[
\int u(x) \,\mathrm d \nu(x) = \int q\left(x\right) \,\mathrm d\mu(x).\label{eq:unu_pmu2}\tag{\(A4\)}\]

On the other hand, since $q$ is convex and $q\ge u$, for any \(\lambda\in F_{\mu}\) we obtain
\[ \int u(x) \,\mathrm d \lambda (x) \le \int q\left(x\right) \,\mathrm d \lambda(x) \le \int q\left(x\right) \,\mathrm d\mu(x)\text{,}\]
where the second inequality follows from Jensen's inequality. We conclude that \(\nu\) is a solution to \eqref{eq:util_maxapp2}.

\medskip

\noindent \textbf{Step 2:} \textit{There is no other solution to problem \eqref{eq:util_maxapp2}.}

Let \(\lambda\in F_{\mu}\) solve problem \eqref{eq:util_maxapp2}.

First, we establish that $\lambda|_P\preceq \mu|_P$ for all \(P \in \P\): Suppose not and let $D_x$ denote the dilations that carry $\lambda$ to $\mu$. Then there is a set of positive $\nu-$measure $B$ such that the dilations $D_x$ for all $x\in B$ have the property that $\supp D_x$ is not contained in an element of $\P$. Therefore, $q$ is not affine on $\supp D_x$, and we obtain $\int q(x) \,\mathrm d\lambda <\int \int q(y) \,\mathrm dD_x \,\mathrm d\lambda(x)=\int q(x) \,\mathrm d\mu(x)$. Therefore, $\lambda$ is not a solution, a contradiction.

The same argument establishes $\lambda|_A=\mu|_A$ for all $A\in \P_1$ since $q$ is strictly convex on the interior of $A$.

Next, we claim that $\supp \lambda \subseteq \supp \nu$.
 If not, there is $A\in\P_2$ and $x\in A\cap \supp\nu$ with $x\not\in \supp \lambda$. Then $\int u(x) \,\mathrm d \lambda(x)< \int q\left(x\right) \,\mathrm d \lambda(x)$ since $u(x)<q\left(x\right)$ for all $x\not\in\supp \nu$. This implies
\[\label{eq:inequal}\tag{\(A5\)}
  \int u(x) \,\mathrm d \lambda(x)<\int q\left(x\right) \,\mathrm d \lambda(x)\le \int q\left(x\right) \,\mathrm d \mu(x)=\int u(x) \,\mathrm d\nu(x)\text{,}\]
where the second inequality follows since \(p\) is convex and \(\lambda\in F_{\mu}\), and the equality follows from \eqref{eq:unu_pmu2}. We conclude that $\lambda$ does not solve problem \eqref{eq:util_maxapp2}, a contradiction.

Finally, since the support of $\nu$ is affinely independent on each $A\in \P_2$, this yields $\nu=\lambda$.
\end{proof}

\section{Sufficient conditions for extremal fusions}\label{sec:appendix_B}

We next explore sufficient conditions for a finitely supported measure $\nu $
to be an extreme point of $F_{\mu }$. We use these conditions to verify that
a particular fusion that is not Lipschitz-exposed is an exposed point.

\begin{definition}
Given a positive measure $\mu $ and a partition $\mathcal{P}$ of $X$ into
finitely many convex subsets, we say that a collection of measures $%
\{u_{P}\}_{P\in \mathcal{P}}$ is a \emph{feasible flow for }$\mathcal{P}$ if
it satisfies 
\begin{align}
0& \leq u_{P}|_{X\setminus P}\leq \mu |_{X\setminus P}  \label{eq:u_pos} \\
-\mu |_{P}& \leq u_{P}|_{P}\leq 0  \label{eq:u_neg} \\
u_{P}(X)& =0 \\
\int_{X}xdu_{P}& =0 \\
\sum_{P}u_{P}& =0  \label{eq:sum_u}
\end{align}
\end{definition}

\begin{proposition}
\label{lem:feasflows} Fix an absolutely continuous measure $\mu $ 
and let $\mathcal{P}$ be a partition of $X$ into finitely many convex sets.

\begin{enumerate}
\item \label{flowi} Suppose that $\nu \in F_{\mu }$ satisfies $\nu
|_{P}\preceq \mu _{P}$ for all $P\in \mathcal{P}$ and that the support of $%
\nu |_{P}$ is affinely independent. If $u_{P}\equiv 0$ for all $P\in 
\mathcal{P}$ is the unique feasible flow for $\mathcal{P}$ then $\nu $ is an
extreme point of $F_{\mu }$.

\item \label{flowii} Suppose there exists a non-zero feasible flow for $%
\mathcal{P}$. Then there exists a fusion $\nu \in F_{\mu }$ that is not an
extreme point of $F_{\mu }$ and that satisfies

\begin{enumerate}
\item For each $P\in \P $, $\nu |_{P}\preceq \mu |_{P}$; and

\item The support of $\nu |_{P}$ is affinely independent.
\end{enumerate}
\end{enumerate}
\end{proposition}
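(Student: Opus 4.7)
The two parts of the proposition fit together as dual constructions: in one direction, a non-trivial convex decomposition of $\nu$ yields a feasible flow for $\P$; in the other, a non-zero flow is used to build a family of perturbed fusions whose average is $\nu$.

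For part \ref{flowi}, I would argue the contrapositive. Assume $\nu = \tfrac{1}{2}\nu_1 + \tfrac{1}{2}\nu_2$ with $\nu_1 \neq \nu_2$ in $F_\mu$; then $\supp \nu_i \subseteq \supp \nu = \{x_1,\dots,x_m\}$. Apply Lemma \ref{lemma:cartier} to each $\nu_i$ to obtain positive measures $\mu^i_j$ with $\mu = \sum_j \mu^i_j$, $r(\mu^i_j) = x_j$, and $\mu^i_j(X) = \nu_i(\{x_j\})$. Aggregate within each cell by $\mu^i_P := \sum_{j:\, x_j \in P} \mu^i_j$, and set
\[
u_P := \tfrac{1}{2}\mu^1_P + \tfrac{1}{2}\mu^2_P - \mu|_P.
\]
Checking the flow axioms is routine: $\sum_P u_P = 0$ telescopes, $u_P(X) = \nu(P) - \mu(P) = 0$ follows from $\nu|_P \preceq \mu|_P$, $\int x \,du_P = 0$ follows from the barycenter equality, and the sign conditions follow from $0 \le \mu^i_P \le \mu$. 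If $u_P \equiv 0$ for all $P$, then on $X \setminus P$ both $\mu^i_P$ vanish (positive measures summing to zero), so each $\mu^i_P$ is supported on $P$, whence $\mu^i_P = \mu|_P$. Matching total mass and barycenter of $\mu|_P$ then gives the system $\sum_{j:\,x_j\in P}\alpha^i_j = \mu(P)$ and $\sum_{j:\,x_j\in P}\alpha^i_j x_j = \int_P x \,d\mu$, for $\alpha^i_j := \nu_i(\{x_j\})$. Affine independence of $\{x_j : x_j \in P\}$ pins down these weights uniquely, forcing $\nu_1 = \nu_2$---a contradiction.

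For part \ref{flowii}, given a non-zero feasible flow $\{u_P\}$, I would first construct a convenient $\nu$ and then perturb it along the flow. Choose $\nu \in F_\mu$ so that on each cell $P$, $\supp(\nu|_P)$ is an affinely independent set spanning $P$ affinely and clustered tightly around $r(\mu|_P)$. Enumerate $\P = \{P_1, \ldots, P_m\}$ and define
\begin{align*}
\mu^i_{P_i} &:= \mu|_{P_i} + \varepsilon\, u_{P_i}|_{X \setminus P_i}, \\
\mu^i_{P_j} &:= \mu|_{P_j} - \varepsilon\, u_{P_i}|_{P_j} \quad (j \neq i).
\end{align*}
The flow constraints imply each $\mu^i_{P_j}$ is a positive measure and $\sum_j \mu^i_{P_j} = \mu$. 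Let $\nu^i_P$ be the unique measure on $\supp(\nu|_P)$ with the same total mass and barycenter as $\mu^i_P$ (unique by affine independence). For $\varepsilon$ small, the barycenter of $\mu^i_P$ remains in the interior of $\ch \supp(\nu|_P)$, so $\nu^i_P \preceq \mu^i_P$ is a genuine fusion, and $\nu^i := \sum_P \nu^i_P$ is a fusion of $\mu$. Finally, using $\sum_i u_{P_i} = 0$, $u_{P_i}(X) = 0$, and $\int x \,du_{P_i} = 0$, the $\varepsilon$-corrections cancel in both total mass and first moment on each $P_j$, so $\tfrac{1}{m}\sum_i \nu^i$ agrees with $\nu$ in mass and barycenter on every cell; affine independence then upgrades this to equality of measures. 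Since the flow is non-zero, some $\nu^i \neq \nu$, showing $\nu$ is not extreme.

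The main technical obstacle lies in part \ref{flowii}: guaranteeing that $\nu^i_P \preceq \mu^i_P$ requires the perturbed barycenter $r(\mu^i_P)$ to stay inside $\ch \supp(\nu|_P)$. This is why the support of $\nu|_P$ must be chosen affinely spanning $P$ yet clustered around $r(\mu|_P)$, so that the simplex $\ch \supp(\nu|_P)$ contains a neighborhood of $r(\mu|_P)$; then $\varepsilon$ is chosen small enough uniformly over cells. In part \ref{flowi}, the key use of affine independence is uniqueness of the weights in the simultaneous decomposition of $\mu|_P$, which is exactly where the hypothesis on $\supp(\nu|_P)$ feeds in.
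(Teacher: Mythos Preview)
Your proposal is correct and follows the paper's approach almost exactly: Cartier's decomposition to build the flow in part~\ref{flowi}, and the $\mu^i_P$ perturbations averaged over cells in part~\ref{flowii}. You are in fact more explicit than the paper in part~\ref{flowi}, where you spell out why $u_P\equiv 0$ forces $\nu_1=\nu_2$ via the affine-independence uniqueness argument; the paper simply asserts ``since $\nu_1\neq\nu\neq\nu_2$, $u_P\neq 0$ for some $P$'' without this step.

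One small slip in part~\ref{flowii}: you justify $\nu^i_P\preceq\mu^i_P$ by ``the barycenter of $\mu^i_P$ remains in $\interior\ch\supp(\nu|_P)$,'' but that condition only guarantees that $\nu^i_P$ is a \emph{positive} measure (the unique affine weights are nonnegative); it does not by itself yield the convex-order relation, which requires an actual Cartier decomposition of $\mu^i_P$. The paper is equally informal here, writing only ``if it wasn't a fusion, we could choose the $\delta$-ball above smaller and redo the exercise.'' The underlying reason both arguments work is that for $\delta$ and $\varepsilon$ small, $\nu^i_P$ is close to the Dirac mass at $r(\mu|_P)$, for which the fusion relation $\delta_{r(\mu|_P)}\cdot\mu(P)\preceq\mu|_P$ holds trivially, and a continuity/perturbation argument on the Cartier decomposition closes the gap.
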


\begin{proof}[Proof of Proposition \ref{lem:feasflows}]
Part \ref{flowi}: Assume \(\nu\) is not an extreme point. Then there are $\nu_1,\nu_2\in F_{\mu}$
such that $\nu_1\neq \nu_2$ and $\nu=\frac{1}{2} \nu_1+\frac{1}{2} \nu_2$. Let $\{x_1,...,x_m\}$ denote the support of \(\nu\) and let $\mu^i_j$ be positive measures such that, for $i=1,2$, $\mu=\sum_j \mu^i_j$, $\nu_i(\{x_j\})=\mu^i_j(X)$, and $r(\mu^i_j)=x_j$ (such measures exist by Lemma \ref{lemma:cartier}). Define $\mu^i_P\coloneqq\sum_{j:x_j\in P} \mu^i_j$ and $u_P\coloneqq\frac{1}{2}\mu^1_P +\frac{1}{2}\mu^2_P - \mu|_P$.
It follows that $\sum_P u_P=0$. Moreover, $u_P$ is positive on $X\setminus P$ and, since $\mu^i_P\le \mu$, it satisfies \eqref{eq:u_pos}. Analogous arguments establish \eqref{eq:u_neg}. Finally, $u_P(X)=0$ and $\int x \,\mathrm du_P=0$. Since $\nu_1\neq \nu\neq \nu_2$, $u_P\neq 0$ for some $P$. Hence, there is a non-zero feasible flow.

Part \ref{flowii}: Conversely, suppose there is a solution $\{u_P\}$ to \eqref{eq:u_pos}--\eqref{eq:sum_u} satisfying $u_Q\neq 0$ for some $Q\in\P$. Enumerate the elements of \(\P\) as $\P=\{P_1,...,P_m\}$. Consider a fusion \(\nu\) of \(\mu\) that satisfies $\nu_P\prec \mu|_P$, the support of $\nu|_P$ is affinely independent and spans \(X\), all points in the support of $\nu|_P$ are contained in a $\delta$-ball around the barycenter of $\nu|_P$.

Choose $\varepsilon\in (0,1)$ small enough and define, for $i=1,...,m$,  
\begin{align*}
    \mu^i_{P_i}\coloneqq\mu|_{P_i}+\varepsilon u_{P_i}|_{X\setminus {P_i}}\\
    \mu^i_{P}\coloneqq\mu|_P-\varepsilon u_{P_i}|_{P} \text{ for } P\neq P_i
\end{align*}

By construction, $\mu^i_P$ is a positive measure for all $P\in\P$ and $\sum_{P\in\P} \mu^i_P = \mu$. If we define $\nu^i_P$ to have support $\supp(\nu)\cap P$ and the same barycenter and mass as $\mu^i_P$, then $\nu^i_P$ is a fusion of $\mu^i_P$ whenever $\varepsilon>0$ is chosen small enough (if it wasn't a fusion, we could choose the $\delta$-ball above smaller and redo the exercise). Consequently, $\nu^i\coloneqq\sum_{P\in\P} \nu^i_P$ is a fusion of \(\mu\) and $\nu^i\neq \nu$ for some $i$.




We claim that $\frac{1}{m} \sum_{i=1}^m \nu^i =\nu$, which implies that \(\nu\) is not an extreme point. To establish the claim, we argue that $\frac{1}{m} \sum_{i=1}^m \nu^i(P)=\nu(P)$ and $\frac{1}{m} \sum_{i=1}^m \int_P x d\nu^i=\int_P x d\nu$ for all $P\in\P$. Since the support of $\nu^i$ and \(\nu\) coincides on each $P$ and is affinely independent, this implies that $\frac{1}{m} \sum_{i=1}^m \nu^i=\nu$. That is,
 
 \begin{align*}
 \frac{1}{m} \sum_{i=1}^m \nu^i(P_j) &= \mu|_{P_j} + \frac{1}{m} \varepsilon u_{P_j}|_{X\setminus {P_j}}(X) - \frac{1}{m} \varepsilon \sum_{i:i\neq j} u_{P_i}|_{P_j}(X)\\
 &= \mu|_{P_j} + \frac{1}{m} \varepsilon u_{P_j}|_{X\setminus {P_j}}(X) + \frac{1}{m} \varepsilon u_{P_j}|_{P_j}(X) \text{ (since }\sum_{i} u_{P_i}|_{P_j}=0 \text{ )} \\
 &= \mu|_{P_j} + \frac{1}{m} \varepsilon u_{P_j}(X)  \\
 &= \nu(P_j)
 \end{align*}

 Also, 
 \begin{align*}
 \frac{1}{m} \sum_{i=1}^m \int_{P_j} x d\nu^i &= \frac{1}{m} \sum_{i=1}^m \int_{P_j} x d\mu^i \\
 &= \int_{P_j} x d\mu  + \varepsilon/m \int_{X\setminus P_j} x d u_{P_i} - \varepsilon/m \sum_{i:i\neq j} \int_{P_j} x d u_{P_i} \\
&= \int_{P_j} x d\mu  + \varepsilon/m \int_{X\setminus P_j} x d u_{P_i} + \varepsilon/m  \int_{P_j} x d u_{P_j} \\
&= \int_{P_j} x d\nu.  
 \end{align*}

\end{proof}



\begin{lemma}
\label{lemma:halfspaces_extreme_points} Suppose that $P\in \mathcal{P}$ is
the intersection of a half-space and $X$, i.e., $P=H\cap X$ for some
half-space of $\mathbb{R}^{n}$. If $\{u_{P}\}$ is a feasible flow then $%
u_{P}|_{P^{\prime }}=u_{P^{\prime }}|_{P}=0$ for all $P^{\prime }$.
\end{lemma}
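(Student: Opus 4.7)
The plan is to exploit the half-space description of $P$ to get a single linear functional that separates $P$ from its complement, then combine this with the moment conditions on $u_P$ to force $u_P \equiv 0$.

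Concretely, write $P = H \cap X$ with $H = \{x : a \cdot x \leq b\}$. The conditions $u_P(X) = 0$ and $\int x\, du_P = 0$ together give
\begin{equation*}
\int_X (a \cdot x - b)\, du_P(x) = 0.
\end{equation*}
Now split this integral into the pieces over $P$ and over $X \setminus P$. On $P$ the function $a\cdot x - b$ is non-positive, and by feasibility $u_P|_P \leq 0$; the integrand/measure sign analysis therefore shows $\int_P (a\cdot x - b)\, du_P \geq 0$. On $X \setminus P$ the function $a\cdot x - b$ is non-negative (and strictly positive off a $\mu$-null hyperplane), and $u_P|_{X\setminus P} \geq 0$, so $\int_{X\setminus P}(a\cdot x - b)\,du_P \geq 0$ as well. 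Since the two non-negative quantities sum to $0$, both vanish.

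Next I would use absolute continuity of $\mu$ to upgrade these integral equalities into pointwise vanishing of $u_P$. Because the bounding hyperplane $\{a\cdot x = b\}$ has $\mu$-measure zero, we have $a\cdot x - b \neq 0$ almost everywhere with respect to $\mu$, and hence almost everywhere with respect to $u_P$ (which is dominated in absolute value by $\mu$ through the feasibility inequalities \eqref{eq:u_pos} and \eqref{eq:u_neg}). From $\int_{X\setminus P}(a\cdot x - b)\, du_P = 0$ together with $u_P|_{X\setminus P}\geq 0$ and $a \cdot x - b > 0$ $\mu$-a.e.\ there, one concludes $u_P|_{X\setminus P} = 0$; the analogous argument on $P$ yields $u_P|_P = 0$. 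Hence $u_P \equiv 0$, which in particular gives $u_P|_{P'} = 0$ for every $P' \in \mathcal{P}$.

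Finally, to obtain $u_{P'}|_P = 0$ for every $P' \neq P$, I would invoke the balance condition $\sum_{P' \in \mathcal{P}} u_{P'} = 0$, restricted to $P$. Since $u_P|_P = 0$, we get $\sum_{P' \neq P} u_{P'}|_P = 0$; but for each $P' \neq P$ the set $P$ lies in $X \setminus P'$, so by \eqref{eq:u_pos} each $u_{P'}|_P$ is a non-negative measure. A sum of non-negative measures equal to zero forces each summand to vanish, giving $u_{P'}|_P = 0$ for all $P'$. The only subtlety is the sign-chasing on signed measures in the first step; once the key linear functional $a\cdot x - b$ is identified, the rest is routine.
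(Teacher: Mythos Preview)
Your argument is correct and follows essentially the same idea as the paper's sketch: both exploit the separating linear functional of the half-space to show that the positive part of $u_P$ (supported in $X\setminus P$) and the negative part (supported in $P$) cannot have matching first moments unless both vanish. Your formulation via $\int (a\cdot x - b)\,du_P = 0$ is a direct unpacking of the paper's barycenter comparison, and you are more explicit than the paper in deriving the second claim $u_{P'}|_P = 0$ from the balance condition $\sum_{P'} u_{P'} = 0$ together with the sign constraint \eqref{eq:u_pos}.
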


If $u$ is a feasible flow then $\int x\,\mathrm{d}u_P=0$ and therefore $\int
x\,\mathrm{d}[-u_P|_P]=\sum_{P^{\prime }\neq P}\int x\,\mathrm{d}%
[u_P|_P^{\prime }]$. The result follows since if $u_{P}|_{P^{\prime }}\neq 0$
then the barycenter of $-u_{P}|_{P}$ lies in the interior of $P$, hence in
the interior of the half-space $H$, whereas the barycenter of $%
\sum_{P^{\prime }\neq P}u_{P}|_{P^{\prime }}$ lies in the complement of this
half-space. This violates $\int x\,\mathrm{d}u_P=0$, therefore $%
u_{P}|_{P^{\prime }}=0$.

Proposition \ref{lem:feasflows} and Lemma \ref%
{lemma:halfspaces_extreme_points} imply that the fusion we construct in our
second example is an extreme point.

\begin{lemma}
\label{lemma:approx_partition} Let $\mu $ be an absolutely continuous
measure with full support on $X$. 

Let $\mathcal{P}$ be a partition of $X$ into finitely many convex sets and
suppose there is a non-zero feasible flow $\{u_{P}\}_{P\in \mathcal{P}}$ for 
$\mathcal{P}$. Let $m_{PQ}\coloneqq u_{P}|_{Q}(X)$ and $x_{PQ}\coloneqq\int
xdu_{P}|_{Q}$. For all $P,Q\in \mathcal{P}$ with $m_{PQ}>0$, let $b_{PQ}%
\coloneqq\frac{x_{PQ}}{m_{PQ}}$.

If $\ \mathcal{P}^{\prime }$ is a partition of $X$ that approximates $%
\mathcal{P}$ in the sense that for each $P\in \mathcal{P}$ there is $%
P^{\prime }\in \mathcal{P}^{\prime }$ such that $P^{\prime }\subseteq P$
and, for all $Q\in \mathcal{P}$, $b_{QP}\in \interior P^{\prime }$ then
there is a non-zero feasible flow for $\P ^{\prime }$.
\end{lemma}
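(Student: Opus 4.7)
The plan is to construct a non-zero feasible flow $\{\tilde{u}_{P'}\}_{P'\in\mathcal{P}'}$ for $\mathcal{P}'$ by \emph{concentrating} each piece $u_P|_Q$ of the original flow into a small ball around its barycenter $b_{PQ}$, which by hypothesis lies in the interior of the cell $\phi(Q)\in\mathcal{P}'$---where $\phi\colon\mathcal{P}\to\mathcal{P}'$ denotes the injection sending each $P$ to the $P'\subseteq P$ guaranteed by the approximation property. Once so concentrated, the positive and negative parts of each redistributed flow automatically sit inside the correct cells of $\mathcal{P}'$: positive mass on $\bigcup_{Q\neq P}\phi(Q)$, negative mass on $\phi(P)$.

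Concretely, fix $\alpha>0$ to be chosen small later. For each ordered pair $(P,Q)$ with $P\neq Q$ and $m_{PQ}>0$, choose a radius $r_{PQ}>0$ with $B(b_{PQ},r_{PQ})\subseteq \interior(\phi(Q))$. Since $\mu$ has full support, Lemma \ref{lemma:new_measure} yields a positive measure $\pi_{PQ}\le \mu|_{B(b_{PQ},r_{PQ})}$ with $\pi_{PQ}(X)>0$ and barycenter $b_{PQ}$; set $v_{PQ}:=\frac{\alpha\,m_{PQ}}{\pi_{PQ}(X)}\,\pi_{PQ}$, a positive measure of mass $\alpha m_{PQ}$ and barycenter $b_{PQ}$ supported in $\interior(\phi(Q))$. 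Then define
\[
\tilde{u}_{\phi(P)}\;:=\;\sum_{Q\neq P}v_{PQ}\;-\;\sum_{Q\neq P}v_{QP}\qquad\text{for each }P\in\mathcal{P},
\]
and $\tilde{u}_{P''}:=0$ for each $P''\in\mathcal{P}'\setminus\phi(\mathcal{P})$.

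Verification is then a matter of bookkeeping. The negative part of $\tilde{u}_{\phi(P)}$ is supported on $\phi(P)$ and its positive part on $\bigcup_{Q\neq P}\phi(Q)\subseteq X\setminus\phi(P)$, which delivers \eqref{eq:u_pos}--\eqref{eq:u_neg}. The flow-balance identities $\sum_{Q\neq P}m_{PQ}=\sum_{Q\neq P}m_{QP}$ and $\sum_{Q\neq P}x_{PQ}=\sum_{Q\neq P}x_{QP}$ follow from $u_P(X)=0$, $\int x\,du_P=0$, and $\sum_P u_P=0$; these give $\tilde{u}_{\phi(P)}(X)=0$ and $\int x\,d\tilde{u}_{\phi(P)}=0$. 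The global condition $\sum_{P'\in\mathcal{P}'}\tilde{u}_{P'}=0$ reduces, after relabeling the double sum, to the tautology $\sum_{P\neq Q}v_{PQ}=\sum_{P\neq Q}v_{PQ}$. Finally, non-triviality is inherited: since $\{u_P\}$ is nonzero, some $m_{P_0Q_0}>0$, hence $v_{P_0Q_0}\neq 0$ and $\tilde{u}_{\phi(P_0)}\neq 0$.

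The main subtle point is securing the pointwise $\mu$-bounds---in particular $\sum_{P\neq Q}v_{PQ}\le \mu|_{\phi(Q)}$ for each $Q$---uniformly over the finitely many pairs. This is where the scalar $\alpha$ enters: when the barycenters $\{b_{PQ}\}_{P\neq Q}$ are distinct in $\phi(Q)$ one can take the balls $B(b_{PQ},r_{PQ})$ to be pairwise disjoint, while coincident barycenters can be handled by grouping the corresponding pieces into a single application of Lemma \ref{lemma:new_measure}. With only finitely many cells to control, a single small $\alpha>0$ suffices to enforce every bound simultaneously, completing the construction.
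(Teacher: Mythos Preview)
Your proposal is correct and follows essentially the same approach as the paper's proof: both replace each off-diagonal piece $u_P|_Q$ by a small positive measure supported near the barycenter $b_{PQ}$ inside the approximating cell, then define the new flow on $\phi(P)$ as the difference of outgoing and incoming pieces, and verify the feasibility conditions via the identities $\sum_{Q\neq P}m_{PQ}=\sum_{Q\neq P}m_{QP}=-m_{PP}$ (and their first-moment analogues). The only noteworthy difference is how the pointwise bound $\sum_{Q\neq P}v_{QP}\le \mu|_{\phi(P)}$ is secured: the paper simply requires each piece to satisfy $\tilde u_{Q'P'}\le \tfrac{1}{k}\mu|_{P'}$ (automatic for $\delta$ small), so that summing at most $k$ of them stays below $\mu|_{P'}$; your route via disjoint balls (with the coincident-barycenter grouping) also works but is slightly more cumbersome.
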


This result implies that if $\nu\in F_{\mu}$ is obtained from a partition $%
\P $ by collapsing the mass on each cell to its barycenter and if $\nu$ is
not an extreme point, then any fusion $\nu^{\prime }$ obtained from a
closeby partition $\P ^{\prime }$ is also not an extreme point. 
\begin{proof}
Let $k$ denote the number of partition elements in \(\P\) and fix $\delta>0$ small enough. For all $P\in \P$ and all $Q\in\P\setminus\{P\}$, let $P'$ and $Q'$ be the corresponding partition elements in $\P'$ that approximate $P$ and $Q$, respectively. Let
 $\tilde{u}_{Q'P'}$ be a non-negative measure with $\tilde{u}_{Q'P'}\le \frac{1}{k} \mu|_{P'}$, $\tilde{u}_{Q'P'}(X) = \delta m_{QP}$, and barycenter $\frac{1}{\delta m_{QP}}\int x d\tilde{u}_{Q'P'} = b_{QP}$ (if $m_{QP}>0$). Such measures exist by Lemma \ref{lemma:new_measure} whenever $\delta>0$ is chosen small enough. For all $P',Q'\in\P$ for which $\tilde{u}_{P'Q'}$ has not been defined this way, let $\tilde{u}_{P'Q'}$ be the zero measure.

For any $P'\in\P'$, define $\tilde{u}_{P'P'}= -\sum_{Q'\in\P'\setminus\{P'\}} \tilde{u}_{Q'P'}$ and define $\tilde{u}_{P'}= \sum_{Q'\in \P'} \tilde{u}_{P'Q'}$.

Then $\sum_{P'\in\P'}\tilde{u}_{P'}=\0$ by construction. Moreover, since $\sum_{Q\in\P} u_{Q}|_P$ is the zero measure, we obtain for any $P\in\P$ and its approximating partition cell $P'\in\P'$,
\[\delta u_P|_P(X)= -\delta\sum_{Q\in\P\setminus\{P\}} u_Q|_P(X) = - \sum_{Q'\in\P'\setminus\{P'\}} \tilde{u}_{Q'P'}(X) =\tilde{u}_{P'P'}(X).\]
This yields
\begin{align*}
\tilde{u}_{P'}(X) &= \sum_{Q'\in\P'} \tilde{u}_{P'Q'}(X) = \sum_{Q\in\P} \delta{u}_{P}|_Q(X) =\delta u_P(X) = 0.
\end{align*}
Analogous arguments establish that $\int x d\tilde{u}_{P'}=0$.
Since $0\le \tilde{u}_{P'}|_{X\setminus P'}\le \mu|_{X\setminus P'}$ and $-\mu|_{P'} \le \tilde{u}_{P'}|_{P'}\le 0$, it follows that $\{\tilde{u}\}$ is a non-zero feasible flow for $\P'$.
\end{proof}

\bibliography{sample}

\end{document}